\newcommand\pd[2]{\frac{\partial #1}{\partial #2}}
\newcommand{\abs}[1]{\lvert#1\rvert}
\newcommand{\norm}[1]{\left\| #1 \right\|}
\journalname{Eur. Phys. J. C}
\begin{document}

\title{Partial spectral flow and the Aharonov--Bohm effect\\ in graphene 
}

\titlerunning{Partial spectral flow}        

\author{Mikhail I.~Katsnelson\thanksref{e1,addr1}
        \and
        Vladimir Nazaikinskii\thanksref{e2,addr2,addr3} 
}

\thankstext{e1}{e-mail: M.Katsnelson@science.ru.nl}
\thankstext{e2}{e-mail: nazaikinskii@googlemail.com}

\authorrunning{Mikhail I.~Katsnelson and Vladimir Nazaikinskii} 

\institute{Institute for Molecules and Materials, Radboud University, Heyendaalseweg 135, 6525AJ, Nijmegen, The Netherlands \label{addr1}
           \and
Ishlinsky Institute for Problems in Mechanics RAS, 101-1 Vernadsky Ave., Moscow 119526, Russia \label{addr2}
\and
Moscow Institute of Physics and Technology, Institutsky lane 9, Dolgoprudny, Moscow region 141700, Russia
\label{addr3}}

\date{Received: date / Accepted: date}

\maketitle

\begin{abstract}
We study the Aharonov--Bohm effect in an open-ended tube made of a graphene sheet whose dimensions are much larger than the interatomic distance in graphene. An external magnetic field vanishes on and in the vicinity of the graphene sheet and its flux through the tube is adiabatically switched on. It is shown that, in the process, the energy levels of the tight-binding Hamiltonian of $\pi$-electrons unavoidably cross the Fermi level, which results in the creation of electron--hole pairs. The number of pairs is proven to be equal to the number of magnetic flux quanta of the external field. The proof is based on the new notion of partial spectral flow, which generalizes the ordinary spectral flow already having well-known applications (such as the Kopnin forces in superconductors and superfluids) in condensed matter physics.

\keywords{Spectral flow \and Lattice fermion models \and Graphene \and Aharonov--Bohm effect \and Dirac equation \and Pair creation}
\end{abstract}
\maketitle
\section{Introduction}

One of the main trends in contemporary theoretical physics and, in particular, theory of condensed matter is the increasing role of geometric and especially topological language \cite{Schap89,Thou,Naka,Volo,Kats12,Merm,Qi10,Hald,Kost}. Subtle and nontrivial topological effects in superfluid helium-3~\cite{Volo}, topologically protected zero-energy states in graphene in magnetic field~\cite{Kats12}, and the quickly growing field of topological insulators~\cite{Qi10} are just a few examples.

In most of cases, the use of topological concepts in condensed matter physics is closely related to the continuum-medium description. For example, the topology of electronic states in graphene, topological insulators, Weyl semimetals, and other ``topological quantum matter'' \cite{Hald} is studied for effective Hamiltonians describing the electronic band structure in the close vicinity of some special points in the Brillouin zone. In this approximation, the Hamiltonians are partial differential operators, and one can use the well-develop\-ed machinery, such as the concepts of index of Dirac operators \cite{ASind} or spectral flow \cite{APSSF}. Note that the appearance of nonzero spectral flow related to ``Dirac-like'' dynamics of fermions in the presence of vortices in rotating superfluid He-3 or in type II semiconductors leads to very interesting observable quantities such as additional forces acting on moving vortices \cite{BMCHHVV,Kop02,Volo,KoKr,Vol86,StGa87,KVP95,Vol13}. However, there also exist natural models in which  the Hamiltonians in periodic crystal lattices are matrices, and accordingly the Schr\"{o}dinger equation for electrons is a finite-difference equation rather than a differential one. Transfer of topological concepts to this case is in general a nontrivial mathematical problem. To our knowledge, it is a rather poorly studied field, at least, in the context of applications to condensed matter physics. Keeping in mind a broad use of lattice models in quantum field theory~\cite{Creu}, it may have even more general interest. Here we will give a solution of one particular problem of this kind, namely, a modification of the concept of spectral flow which is required when passing from the continuum-medium to lattice description of electronic structure of gra\-phene~\cite{Kats12}.

\begin{figure}[ht]
\centering
\includegraphics{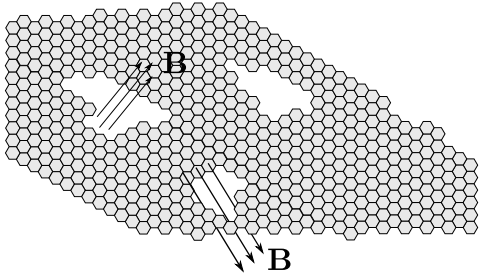}
\caption{Graphene flake}\label{fig01}
\end{figure}

Consider a flake with several holes containing magnetic fluxes (see Fig.~\ref{fig01}). Even when the magnetic field is nonzero only within the holes, it will affect the wave function and the energy spectrum of the electrons in the flake owing to the Aharonov--Bohm effect~\cite{AB59,OP85}. The spectrum should be a periodic function of the fluxes; namely, when all fluxes are changed by some integers (in the units of flux quantum), the spectrum should coincide with the initial one. If the Hamiltonian with purely discrete spectrum is bounded or at least semibounded above or below, it means automatically that the total number of, say, negative eigenvalues is a periodic function of the fluxes, and the spectral flow is zero.\footnote{Note, however, that the spectral flow occurring in the construction of the Kopnin spectral flow force~\cite{Kop02,Vol13} may well be nonzero even for a finite-dimensional Hamiltonian, because the periodicity condition is not satisfied there.} However, for the Dirac operator, which is unbounded on both sides, it can be also the shift of the spectrum, e.g., $E_n \longrightarrow E_{n+1}$. In this situation the spectral flow is nonzero. It was proven~\cite{Prokh,KatNa1} that such a situation arises in graphene for a certain kind of boundary conditions if the electrons in graphene are described by the Dirac approximation. This has important physical consequences~\cite{KatNa1}. In particular, a nonzero spectral flow means that for any position of the Fermi energy when changing the magnetic fluxes it will be unavoidably the situation when one of the energy levels coincides with the Fermi energy, which means all kind of specific many-body effects, potential instabilities, etc.~\cite{Kats12}.

However, literally speaking, this cannot be the case of real graphene, because the Dirac model is valid only within a close vicinity of the conical $K$ and $K'$ points. At larger energy scale, one needs to use a tight-binding model with a finite bandwidth \cite{Kats12}. Obviously, the usually defined spectral flow can be only zero in such a situation.

In this paper, we introduce a concept of \textit{partial} spectral flow for the tight-binding model of graphene. We will show that despite the vanishing of the total spectral flow the physical conclusion~\cite{KatNa1} on the unavoidable crossing of energy levels with the Fermi energy at adiabatically growing magnetic flux remains correct.

To make our consideration mathematically rigorous and to avoid unnecessary, purely technical
complications we will consider the situation simpler than in Fig.~\ref{fig01}, namely, a gra\-phene
tube (which can be considered as a carbon nanotube of a very large radius). We conjecture that the same situation takes place also for the case of graphene flake with several holes considered in~\cite{KatNa1}.

\section{Reminder: Hamiltonians of $\pi$-electrons in an infinite graphene sheet}\label{s2}

We use the common model described in~\cite[Chap.~1]{Kats12}. Recall that graphene has hexagonal (``honeycomb'') lattice with nearest-neighbor interatomic distance $a\approx1.42$\,\AA. The lattice naturally splits into two sublattices~$A$ and~$B$, where each atom in sublattice~$A$ is surrounded by three atoms of sublattice~$B$, and vice versa.
\begin{figure}[ht]
\centering
\includegraphics{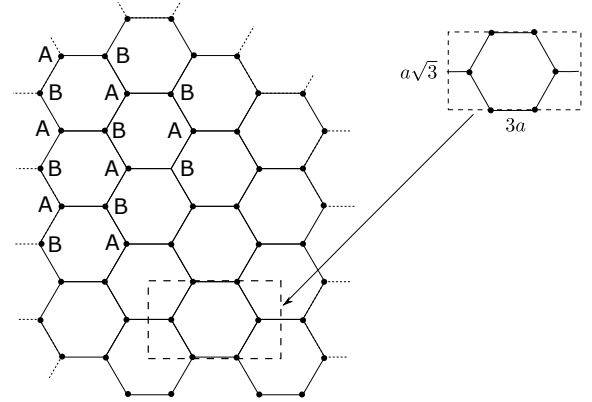}
\caption{Graphene honeycomb lattice and one of the hexagons}
\label{fig02}
\end{figure}
Geometrically, it will be convenient to us to think of the sheet plane as tiled by $3a\times\sqrt3a$ rectangles each containing a single hexagon of the lattice (see Fig.~\ref{fig02}). Each of sublattices~$A$ and~$B$ is a Bravais lattice with primitive vectors
\begin{equation*}
  a_1 =\biggl(\frac{3a}2,\frac{a\sqrt3}{2}\biggr),\quad
  a_2 =\biggl(\frac{3a}2,-\frac{a\sqrt3}{2}\biggr),
\end{equation*}
and the reciprocal lattice is generated by the vectors (see Fig.~\ref{fig03})
\begin{equation*}
  b_1 =\biggl(\frac{2\pi}{3a},\frac{2\pi}{a\sqrt3}\biggr),\quad
  b_2 =\biggl(\frac{2\pi}{3a},-\frac{2\pi}{a\sqrt3}\biggr).
\end{equation*}
\begin{figure}[t]
\centering
\includegraphics{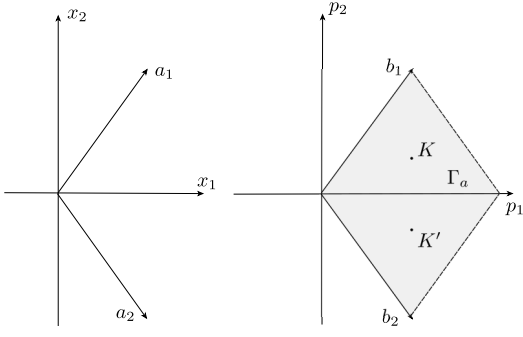}
\caption{Lattice vectors, reciprocal lattice vectors, the fundamental domain $\Gamma_a$, and the Dirac points $K$ and $K'$}
\label{fig03}
\end{figure}

In the tight-binding approximation, the electron $\psi$-func\-tion is defined on the lattice, and the Hamiltonian has the form
\begin{equation}\label{eq1-01}
  [\widehat H\psi](x)=\gamma_0\sum_y\psi(y),
\end{equation}
where the sum is over the three neighbors~$y$ of the lattice point~$x$ and $\gamma_0$ is a constant known as the \emph{hopping parameter}. Note that the sign of gamma does not affect any properties of the Hamiltonian and can be changed just by re-definition of the basis vectors~\cite[Chap.~1]{Kats12}. To be specific, we will assume here $\gamma_0 >0$.

The Hamiltonian can be conveniently expressed in terms of the operators $\widehat p=(\widehat p_1,\widehat p_2)$, $\widehat p_j=-i\pd{}{x_j}$, if the $\psi$-function is represented as a 2-vector $\psi=\bigl(\begin{smallmatrix}
                 \psi_B \\
                 \psi_A
               \end{smallmatrix}\bigr)$,
where $\psi_B$ and $\psi_A$ are the restrictions of~$\psi$ to sublattices~$B$ and~$A$, respectively. Then
\begin{equation}\label{tbh}
\begin{aligned}
\widehat H=H(\widehat p),\qquad H(p)&=
\gamma_0
\begin{pmatrix}
  0 & T(p) \\
  T^*(p) & 0 \\
\end{pmatrix},
\\
T(p)&=\sum_{j=1}^3e^{i\langle\delta_j,  p\rangle},
\end{aligned}
\end{equation}
where
\begin{equation*}
    \delta_1=\biggl(\frac a2,\frac{a\sqrt3}{2}\biggr),\quad
    \delta_2=\biggl(\frac a2,-\frac{a\sqrt3}{2}\biggr),\quad
  \delta_3=(-a,0)
\end{equation*}
are the vectors joining a point of sublattice~$B$ with its nearest $A$ neighbors
and $\langle u,v\rangle=u_1v_1+u_2v_2$. Thus, $e^{i\langle\delta_j,\widehat p\rangle}$ is a~shift operator,
\begin{equation}\label{shift}
\bigl[e^{i\langle\delta_j,\widehat p\rangle}\varphi\bigr](x)=\varphi(x+\delta_j).
\end{equation}
The function $T(p)$ vanishes at the \emph{Dirac points}
\begin{equation*}
  K=\biggl(\frac{2\pi}{3a},\frac{2\pi}{3a\sqrt3}\biggr),
  \qquad
 K'=\biggl(\frac{2\pi}{3a},-\frac{2\pi}{3a\sqrt3}\biggr)
\end{equation*}
of the reciprocal lattice (see Fig.~\ref{fig03}), and for $\psi$-functions localized in the momentum space near these points the Dirac Hamiltonians are used, which are obtained as approximations to the tight-binding Hamiltonian as follows. Make the change of variables
\begin{equation}\label{tbh2D}
   \begin{pmatrix}
     \psi_B(x) \\
     \psi_A(x) 
   \end{pmatrix}=
   W\begin{pmatrix}
     u_B(x) \\
     u_A(x)
   \end{pmatrix},\qquad
   W=e^{i\langle\widetilde K,x\rangle}
   \begin{pmatrix}
     1 & 0 \\
     0 & e^{-\tfrac{5\pi i}{6}} 
   \end{pmatrix},
\end{equation}
where $\widetilde K=K$ or~$K'$. Then the Hamiltonian acting on the vector functions $u=\bigl(\begin{smallmatrix} u_B\\ u_A \end{smallmatrix}\bigr)$ is
\begin{equation*}
  W^{-1}\widehat HW=\gamma_0\begin{pmatrix}
  0 & e^{-\tfrac{5\pi i}{6}}T(\widetilde K+\widehat p) \\
  e^{\tfrac{5\pi i}{6}}T^*(\widetilde K+\widehat p) & 0
\end{pmatrix}.
\end{equation*}
Assuming that $u_B(x)$ and $u_A(x)$ are smooth functions on~$\mathbb{R}^2$ varying slowly compared with the exponential $e^{i\langle\widetilde K,x\rangle}$, the symbol $T(\widetilde K+p)$ can be replaced in the first approximation by the linear part of its Taylor expansion at the point $p=0$, and
we obtain the Dirac Hamiltonian $\widehat D=D^{+}(\widehat p)$ if $\widetilde K=K$ or $\widehat D'=D^{-}(\widehat p)$ if $\widetilde K=K'$, where
\begin{align}\nonumber
  D^{\pm}(p)&=\frac{3a\gamma_0}2\begin{pmatrix}
      0 & p_1\pm ip_2 \\
      p_1\mp ip_2 & 0
    \end{pmatrix},
\\ \label{Dir}
  D^{\pm}(\widehat p)&=\frac{3a\gamma_0}2\begin{pmatrix}
      0 & \displaystyle -i\pd{}{x_1}\pm \pd{}{x_2} \\
      \displaystyle -i\pd{}{x_1}\mp\pd{}{x_2} & 0
    \end{pmatrix}.
\end{align}

\section{Main results}\label{ss22}

Consider a graphene tube in the shape of a right circular open-ended cylinder whose length and radius are both much greater than the distance between neighboring carbon atoms. We will study how the $\pi$-electron energy levels in graphene are affected if one adiabatically switches on a magnetic field $\mathbf{B}$ whose line pass through the tube and which vanishes on the tube surface (see Fig.~\ref{fig04}).
\begin{figure}[ht]
\centering
\includegraphics{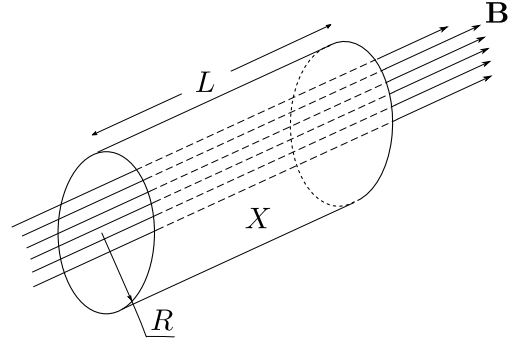}
\caption{Graphene tube~$X$}
\label{fig04}
\end{figure}

\subsection{Hamiltonians and boundary conditions}

We denote the cylinder by~$X$. Let~$L$ and~$R$ be the cylinder length and radius, respectively. We assume that $L\gg a$ and~$R\gg a$, where $a$ is the nearest-neighbor interatomic distance.
The circumference of the tube is~$l=2\pi R$. We use the coordinates $(x_1,x_2)$ on~$X$, where $x_1\in[0,L]$ is the coordinate along the cylinder axis and $x_2\in[0,l]$ is the circumferential coordinate (so that the endpoints of $[0,l]$ are glued together) and sometimes identify $X$ with $[0,L]\times[0,l]$.

The unfolded graphene tube is shown in Fig.~\ref{fig05}. We assume that the graphene lattice, which we denote by~$X_a=X_A\cup X_B\subset X$, has zigzag boundaries at the tube ends.
\begin{figure}[ht]
\centering
\includegraphics{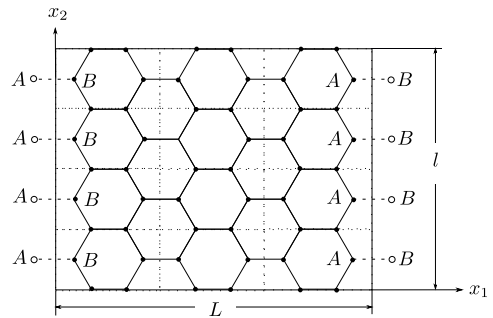}
\caption{Unfolded graphene tube ($M=3$ and $N=4$). The lattice points shown by open dots outside the rectangle $[0,L]\times[0,l]$ are fictitious; i.e., the corresponding carbon atoms are not actually present in the tube}
\label{fig05}
\end{figure}
Then we have $L=3aM$ and $l=\sqrt3aN$, where $M$ and $N$ are the numbers of elementary $3a\times\sqrt3a$ rectangles (cf. Fig.~\ref{fig02}, right)  along the $x_1$- and $x_2$-axis, respectively. It is easily seen that the lattice $X_a$ has $4MN$ vertices. Mathematically, it is convenient to assume that $L$ and $l$ are constant and $a$ is a small parameter. Thus, $a\to0$ and accordingly $M,N\to\infty$ so that the ratio $M/N$ remains constant, $M/N=L/(l\sqrt3)$. This is the point of view we take in what follows. 

For the graphene tube, definition~\eqref{eq1-01} (or, equivalently, \eqref{tbh}) of the tight-binding Hamiltonian fails to work at the boundary sites, where one of the neighboring lattice points is missing (see Fig.~\ref{fig05}). To make the definition work, we must somehow define the values of the $\psi$-function at the ``fictitious'' neighboring sites outside~$X$ based on its values at the sites belonging to~$X_a$. There are many ways to do this; here we use the simplest rule and define the value at an outer site to be equal to the value at the nearest inner site; i.e., we set
\begin{equation}\label{bc}
\begin{aligned}
  \psi_A\Bigl(-\frac a2,x_2\Bigr)&:=\psi_B\Bigl(\frac a2,x_2\Bigr),\\
  \psi_B\Bigl(L+\frac a2,x_2\Bigr)&:=\psi_A\Bigl(L-\frac a2,x_2\Bigr).
\end{aligned}
\end{equation}
A straightforward computation shows that the operator $\widehat H$ defined by~\eqref{eq1-01} with the boundary conditions~\eqref{bc} is self-ad\-joint in the Hilbert space ${\mathcal{H}}_a=\ell^2(X_a)$ with inner product
\begin{equation}\label{ipXA}
  (\psi,\widetilde\psi)=\frac{1}{4MN}\sum_{x\in X_a}\overline{\psi(x)}\widetilde\psi(x).
\end{equation}

Now if we substitute~\eqref{tbh2D} into~\eqref{bc} and let $a\to0$, then we arrive at the boundary conditions for the Dirac operators~\eqref{Dir}. They have the form
\begin{equation}\label{bmbc}
  -iu_B(0,x_2)=u_A(0,x_2),\quad
   -iu_B(L,x_2)=u_A(L,x_2)
\end{equation}
and are a special case of the \emph{Berry--Mondragon boundary conditions}~\cite{BeMo}
\begin{equation*}
 (n_{x_2}-in_{x_1})u_B=\varkappa u_A,
\end{equation*}
where $\mathbf{n}=(n_{x_1},n_{x_2})$ is the inward normal on the boundary and $\varkappa$ is a nonvanishing real-valued function on the boundary. Indeed, $\mathbf{n}=(1,0)$ at the left end of the tube ($x_1=0$), and $\mathbf{n}=(-1,0)$ at the right end ($x_1=L$). Thus, $\varkappa=1$ for the first condition in~\eqref{bmbc}, and $\varkappa=-1$ for the second condition.  The expressions~\eqref{Dir} with the boundary conditions~\eqref{bmbc} define self-adjoint operators~$\widehat D$ and~$\widehat D'$ on the Hilbert space ${\mathcal{H}}_0=L^2(X)\oplus L^2(X)$ with inner product
\begin{equation}\label{ipX}
  (u,v)=\frac{1}{2Ll}\iint_{[0,L]\times[0,l]}
  \bigr(\overline{u_A(x)}v_A(x)+\overline{u_B(x)}v_B(x)\bigl)\,dx.
\end{equation}

\subsection{Switching on the magnetic field}

Consider a magnetic field~$\mathbf{B}$ vanishing on and in the vicinity of the tube surface. (This is the setting in which one speaks of the Aharonov--Bohm effect: the field is zero in the domain where the particles (in our case, the $\pi$-electrons) are confined. However, note that all the subsequent constructions remain valid under the weaker condition that the normal component of~$\mathbf{B}$ vanishes everywhere on the tube surface.) Let us switch on the field adiabatically. This means that we have a continuous family $\mathbf{B}(t)$ of magnetic fields vanishing on~$X$ such that $\mathbf{B}(0)=0$ and $\mathbf{B}(1)=\mathbf{B}$, and $t$ is slow (``adiabatic'') time; that is, $t$ varies with the ordinary time so slowly that the system can be viewed as passing through a family of stationary states. Physically, this means that the dissipation of the energy levels due to the finite time of the process must be much less than the distance between neighboring energy levels, $\hbar/\tau\ll\Delta E$, where $\hbar$ is the Planck constant, $\tau$ is the actual (physical) time of the switching-on process, and $\Delta E$ is the interlevel distance (which in our problem is of the order of the hopping parameter $\gamma_0$ divided by the sample area, that is, of the order of $\gamma_0/(MN)$).  The simplest example is $\mathbf{B}(t)=t\mathbf{B}$. We can write $\mathbf{B}(t)=\nabla\times\mathbf{A}(t)$, where $\mathbf{A}(t)$ is the magnetic vector potential. It will be assumed without loss in generality that $\mathbf{A}(0)=0$ (which is consistent with the condition $\mathbf{B}(0)=0$).
Let $A_1(x,t)$ and $A_2(x,t)$, $x\in X$, be the axial and circumferential components, respectively, of the vector potential~$\mathbf{A}(t)$ restricted to the tube surface. We write ${\mathrm{A}}=(A_1,A_2)$. (If magnetic potentials are interpreted as differential $1$-forms, then $A_1(x,t)\,dx_1+A_2(x,t)\,dx_2$ is just the restriction of $\mathbf{A}(t)$ to~$X$.)

The condition that $\mathbf{B}(t)=0$ on~$X$ implies that
\begin{equation}\label{e-zerocurl}
  \pd{A_1}{x_2}-\pd{A_2}{x_1}=0,\qquad x\in X.
\end{equation}
In the presence of the magnetic field~$\mathbf{B}(t)$, the boundary conditions remain the same, and the momentum operator occurring in the Hamiltonians is modified as follows \cite[Ch.~2]{Kats12}:
\begin{equation}\label{longmomenta}
  \widehat p_j=-i\pd{}{x_j}\longmapsto
  \widehat p_j-A_j(x,t),\qquad j=1,2.
\end{equation}
(We work in a system of units where $e=1$ and $c=1$ and omit the factor $e/c$.) Thus, in the Dirac approximation we have the Hamiltonians
\begin{equation}\label{Dir-m}
 \widehat D_t=D^+(\widehat p-{\mathrm{A}}(x,t)), \qquad
 \widehat D_{\mathbf{A}}'=D^-(\widehat p-{\mathrm{A}}(x,t))
\end{equation}
corresponding to the $K$ and $K'$ valleys, respectively, with the boundary conditions~\eqref{bmbc}, and the  tight-binding Hamiltonian becomes
\begin{equation}\label{tbh-m}
   \widehat H_t=H(\widehat p-{\mathrm{A}}(x,t))
\end{equation}
with the boundary conditions~\eqref{bc}. The symbol $H(p)$ (see~\eqref{tbh}) involves exponential functions of~$p$, and so it might be helpful if we explain how the right-hand side of~\eqref{tbh-m} is defined. It suffices to define the exponential $e^{i\langle \delta_j,\widehat p-{\mathrm{A}}(x,t)\rangle}$. This exponential is none other than the value at $\tau=1$ of the solution of the Cauchy problem for the first-order differential equation
\begin{equation*}
  -i\pd u\tau=\langle \delta_j,\widehat p-{\mathrm{A}}(x,t)\rangle u,\qquad
  u|_{t=0}=1.
\end{equation*}
By solving this problem, we find that
\begin{multline}\label{shiftm}
  e^{i\langle \delta_j,\widehat p-{\mathrm{A}}(x,t)\rangle}
  \\=
  \exp\Bigl\{-i\int_{0}^{1}\langle\delta_j,
  {\mathrm{A}}(x+\tau\delta_j,t)\rangle \,d\tau\Bigr\}e^{i\langle \delta_j,\widehat p\rangle}.
\end{multline}

Now assume that the magnetic flux~$\Phi$ of the field~$\mathbf{B}$ through the tube is an integer multiple of $2\pi$:
\begin{equation}\label{Phi}
  \Phi=\int_{0}^{l} A_2(x_1,x_2,1)\,dx_2=2\pi q,\qquad q\in\mathbb{Z}.
\end{equation}
(The integral in~\eqref{Phi} is independent of~$x_1$ by condition~\eqref{e-zerocurl}.) The number~$q$ is referred to as the ``number of magnetic flux quanta'' through the tube. In view of~\eqref{e-zerocurl}, there exists a function $S(x)$ on the rectangle $[0,L]\times[0,l]$ such that $\nabla S(x)=\mathrm{A}(x,1)$, and it follows from~\eqref{Phi} that
\begin{equation*}
  S(x_1,l)-S(x_1,0)=2\pi q.
\end{equation*}
Consequently, $e^{iS(x_1,0)}=e^{iS(x_1,l)}$, the formula
\begin{equation}\label{U}
  U(x)=e^{iS(x)}
\end{equation}
gives a well-defined smooth function on the cylinder~$X$, and one has
\begin{equation*}
  \nabla U(x)=A(x,1)U(x).
\end{equation*}
It follows that $\widehat p-{\mathrm{A}}(x,1)=U \widehat p U^{-1}$, and we see that the gauge transformation by~$U$ establishes a unitary equivalence between the Hamiltonians at $t=0$ and $t=1$:
\begin{equation}\label{UU-1}
\begin{aligned}
  \widehat H&\equiv \widehat H_0=U^{-1} \widehat H_1 U,\\
   \widehat D\equiv \widehat D_0=U \widehat D_1 U^{-1},&\qquad
   \widehat D'\equiv \widehat D_0'=U \widehat D_1' U^{-1}.
\end{aligned}
\end{equation}
Thus, the spectrum of each of these Hamiltonians without the magnetic field is the same as that of the same Hamiltonian with the magnetic field fully switched on. But what happens with the spectrum in between, that is, as $t$ varies from~$0$ to~$1$? Do the eigenvalues cross the zero level? How many of them do so, and in what direction?

\subsection{Aharonov--Bohm effect for the Dirac Hamiltonians}\label{ss32}

The answer for the case of Dirac Hamiltonians was given in~\cite{Prokh,KatNa1}. An adequate tool for describing the motion of eigenvalues is given by the notion of \emph{spectral flow} introduced by Atiyah, Patodi, and Singer~\cite{APSSF}, which can be informally described as follows. Consider a family $\{B_t\}_{t\in[0,1]}$ of self-adjoint operators that in some sense continuously depend on~$t$ and whose spectrum in a neighborhood of zero is purely discrete. Then the spectral flow $\operatorname{sf}\{B_t\}$ is the net number of eigenvalues crossing zero in the positive direction as $t$ varies from~$0$ to~$1$ (see Fig.~\ref{fig06}).
\begin{figure}[ht]
\centering
\includegraphics{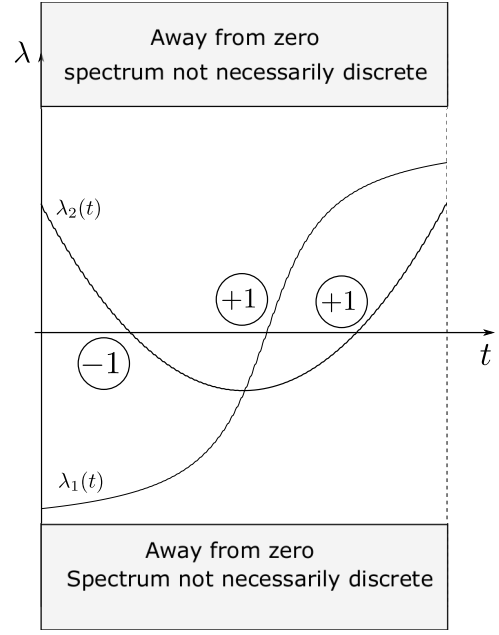}
\caption{Example of computation of the spectral flow. The eigenvalue~$\lambda_1(t)$ has one crossing counting as~$+1$; th eigenvalue $\lambda_2(t)$ has two crossings, one counting as~$-1$ and the other as~$+1$; as a result, the spectral flow is~$(+1)+(-1)+(+1)=1$}
\label{fig06}
\end{figure}
The rigorous definition can be found in~\cite{APSSF} and, in a different form, in~\cite{BLP1} (see also~\cite{NSScS99} and Remark~\ref{rk-sf} in the next subsection).
The spectral flow is homotopy invariant in the class of families such that $B_0$ and $B_1$ are isospectral (i.e., have the same spectrum) and hence can be computed by topological means. A formula for the spectral flow of Dirac Hamiltonians on an arbitrary graphene ``flake'' was conjectured in~\cite{Prokh} and then shown to be true in~\cite{KatNa1}, where a general theorem on the spectral flow of families of Dirac type operators with classical boundary conditions on a compact manifold with boundary was proved. In our situation, this formula is as follows.
\begin{proposition}[\textnormal{special case of~{\cite[Theorem~1]{KatNa1}}}]\label{Th1}
Let condition~\eqref{Phi} be satisfied.
Then the spectral flow of the families~\eqref{Dir-m} is given by the formula
\begin{equation}\label{spfl-0}
  \operatorname{sf}\{\widehat D_t'\}=-\operatorname{sf}\{\widehat D_t\}=q.
\end{equation}
\end{proposition}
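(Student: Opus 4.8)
\emph{Proof strategy.}
I would prove Proposition~\ref{Th1} directly for the cylinder (it is also the cylinder specialization of the general theorem of~\cite[Theorem~1]{KatNa1}). The plan has three parts: (i) use homotopy invariance of the spectral flow to replace $\mathrm A(\cdot,t)$ by an $x_2$-independent model potential; (ii) separate variables in the circumferential coordinate and reduce to a countable orthogonal sum of one-dimensional Dirac operators; (iii) locate the zero crossings of these operators and compute the direction of each one by first-order perturbation theory.

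For step~(i) I would assume $q>0$ (the cases $q<0$ and $q=0$ being symmetric resp.\ trivial by~\eqref{UU-1}) and consider the deformation $\mathrm A_s(x,t)=(1-s)\mathrm A(x,t)+s\bigl(0,\tfrac{2\pi qt}{l}\bigr)$, $s\in[0,1]$. For every $(s,t)$ the operator $D^{\pm}(\widehat p-\mathrm A_s(\cdot,t))$ with the boundary conditions~\eqref{bmbc} is self-adjoint with discrete spectrum; at $t=0$ it equals $\widehat D$ (resp.\ $\widehat D'$) for all $s$, and at $t=1$, since $\mathbf B(1)=0$ on $X$ and the flux is $2\pi q$ by~\eqref{Phi}, one has $\mathrm A_s(\cdot,1)=\bigl(0,\tfrac{2\pi q}{l}\bigr)+\nabla h_s$ with $h_s=(1-s)g$ single-valued on $X$, so that $D^{\pm}(\widehat p-\mathrm A_s(\cdot,1))=e^{ih_s}D^{\pm}\bigl(\widehat p-(0,\tfrac{2\pi q}{l})\bigr)e^{-ih_s}$ is a conjugation of a fixed operator by a norm-continuous family of unitaries on $X$; since conjugation does not move the spectrum, the associated path of endpoints has zero spectral flow (the ranks of the spectral projections $P_{[0,\epsilon]}$ are constant along it). By homotopy invariance and additivity under concatenation of the spectral flow~\cite{APSSF,BLP1} it then follows that $\operatorname{sf}\{\widehat D_t\}=\operatorname{sf}\{\widehat D^{(0)}_t\}$ and $\operatorname{sf}\{\widehat D_t'\}=\operatorname{sf}\{\widehat D^{(0)\prime}_t\}$, where $\widehat D^{(0)}_t=D^{+}\bigl(\widehat p-(0,\tfrac{2\pi qt}{l})\bigr)$ and $\widehat D^{(0)\prime}_t=D^{-}\bigl(\widehat p-(0,\tfrac{2\pi qt}{l})\bigr)$ carry the boundary conditions~\eqref{bmbc}.

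For step~(ii): since $(0,\tfrac{2\pi qt}{l})$ is independent of $x_2$ and~\eqref{bmbc} is local in $x_2$, the Fourier decomposition in the circumferential variable splits $\widehat D^{(0)}_t$ into the orthogonal sum over $n\in\mathbb Z$ of the one-dimensional operators
\begin{equation*}
  \widehat D_{t,n}=\tfrac{3a\gamma_0}{2}
  \begin{pmatrix} 0 & \widehat p_1+ik_n(t)\\ \widehat p_1-ik_n(t) & 0\end{pmatrix},\quad
  k_n(t)=\tfrac{2\pi(n-qt)}{l},
\end{equation*}
on $L^2[0,L]\oplus L^2[0,L]$ with $-iu_B=u_A$ at $x_1=0,L$; for $\widehat D^{(0)\prime}_t$ the summands $\widehat D'_{t,n}$ are the same with $k_n$ replaced by $-k_n$. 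For step~(iii): solving the first-order system $\widehat D_{t,n}\psi=0$ directly shows that $0$ is an eigenvalue of $\widehat D_{t,n}$ (or of $\widehat D'_{t,n}$) exactly when $k_n(t)=0$, i.e.\ $t=n/q$, and then the kernel is one-dimensional, spanned by the constant spinor $\psi_0$ with $u_B=i$, $u_A=1$. Hence only the modes $n=0,1,\dots,q$ have a zero crossing, each simple, so $\operatorname{sf}\{\widehat D^{(0)}_t\}=\sum_{n=0}^{q}\operatorname{sf}\{\widehat D_{t,n}\}$. At a crossing, first-order perturbation theory gives $\tfrac{d}{dt}E=\langle\psi_0,\dot{\widehat D}_{t,n}\psi_0\rangle$ with $\dot{\widehat D}_{t,n}=\tfrac{3a\gamma_0}{2}\dot k_n\bigl(\begin{smallmatrix}0&i\\-i&0\end{smallmatrix}\bigr)$ and $\dot k_n=-2\pi q/l$; since $\psi_0$ is an eigenvector of $\bigl(\begin{smallmatrix}0&i\\-i&0\end{smallmatrix}\bigr)$ for the eigenvalue $+1$, one gets $\tfrac{d}{dt}E=\tfrac{3a\gamma_0}{2}\dot k_n<0$, every crossing of the $K$ valley is downward, and $\operatorname{sf}\{\widehat D_t\}=-q$. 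For the $K'$ valley the perturbation is $\tfrac{3a\gamma_0}{2}\dot k_n\bigl(\begin{smallmatrix}0&-i\\i&0\end{smallmatrix}\bigr)$, which sends $\psi_0$ to $-\psi_0$, so every crossing is upward and $\operatorname{sf}\{\widehat D_t'\}=+q$; this is~\eqref{spfl-0}.

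I expect the genuinely delicate point to be the bookkeeping at the endpoints $t=0,1$: because the Dirac point lies exactly at zero energy, $E=0$ already belongs to the spectrum of $\widehat D_0=\widehat D$ (via the mode $n=0$) and of $\widehat D_1$ (via $n=q$), so two of the $q+1$ crossings occur on the boundary of $[0,1]$ rather than in its interior, and one must use the precise convention of~\cite{APSSF,BLP1} for the spectral flow at the endpoints to confirm that these two boundary crossings contribute, between them, exactly one unit (with the sign found above) to $\operatorname{sf}\{\widehat D_t\}$ and $\operatorname{sf}\{\widehat D_t'\}$. The cleanest way to settle this uniformly — and to cover the general flakes and general classical boundary conditions mentioned in the Introduction — is precisely to invoke~\cite[Theorem~1]{KatNa1}, of which Proposition~\ref{Th1} is the cylinder specialization.
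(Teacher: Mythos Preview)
The paper does not give its own proof of Proposition~\ref{Th1}; it is simply quoted as the cylinder specialization of~\cite[Theorem~1]{KatNa1}. Your direct argument is therefore not a reproduction of anything in the paper but an independent proof, and it is essentially correct.

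It is worth noting, however, that your steps~(i)--(ii) parallel very closely what the paper does later, in the proof of Theorem~\ref{th-main} (step~\textbf{G} and Secs.~\ref{ssA5}--\ref{ssA6}), for a different purpose. There the paper also reduces to an $x$-independent potential and decomposes with respect to the basis $u_{mn}$, obtaining on the one-dimensional invariant subspaces $\operatorname{Lin}\{u_{0n}\}$ exactly the eigenvalues $2\pi(n-q(t))/l$ that your ODE computation finds. Two differences: the paper gauges, for each fixed $t$, to the potential $(0,\Phi(t)/l)$ carrying the actual time-dependent flux $\Phi(t)=2\pi q(t)$, so that $\widehat D_t$ and $\widehat D_{0t}$ are unitarily equivalent (not merely homotopic) and the equality of spectral flows is immediate; and it uses the full two-index basis $u_{mn}$ from the start rather than Fourier-separating $x_2$ and then solving a boundary-value problem in $x_1$. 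Your route has the advantage that the linear model $q(t)=qt$ makes every crossing simple and transverse, so the first-order perturbation argument in step~(iii) goes through cleanly.

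Your endpoint concern is legitimate but resolves in your favour under the convention used in Definition~\ref{def-spf} (with $\gamma_1<0$ and $[\gamma_1,0)$ disjoint from the spectrum at $t=0,1$): the mode $n=0$, which starts at zero and moves down, is counted as a full downward crossing, while the mode $n=q$, which ends at zero from above, is not; together with the $q-1$ interior crossings this gives precisely $-q$ for $\widehat D_t$ and $+q$ for $\widehat D_t'$.
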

Thus, the spectral flow coincides (up to the sign) with the number of magnetic flux quanta. 

\subsection{Partial spectral flow}

If we try to apply the same tool---spectral flow---to the case of the tight-binding Hamiltonian, then we immediately see that such an approach fails. Indeed, the tight-binding Hamiltonian acts on the finite-dimensional space~${\mathcal{H}}_a$, and hence the spectral flow of the family~$\widehat H_t$ (as well as of any operator family $\{B_t\}$ on a finite-dimensional space with isospectral~$B_0$ and~$B_1$) is necessarily zero.

That is why we introduce a finer notion of \emph{partial spectral flow along a subspace}, which takes into account not only the eigenvalues themselves but also how close the corresponding eigenvectors are to a given subspace.

Let ${\mathcal{H}}$ be a Hilbert space, and let ${\mathcal{L}}\subset{\mathcal{H}}$ be a (closed) subspace. The orthogonal projection onto~${\mathcal{L}}$ in~${\mathcal{H}}$ will be denoted by~$P_{\mathcal{L}}$.

Consider a family $\{B_t\}$, $t\in[0,1]$, of self-adjoint operators on~${\mathcal{H}}$. By $E(B_t,J)$, where $J\subset\mathbb{R}$ is an arbitrary interval, we denote the orthogonal projection in~${\mathcal{H}}$ onto the closed linear span of eigenvectors of~$B_t$ corresponding to the eigenvalues lying in~$J$.

\begin{definition}\label{d-Ltame}
The family $\{B_t\}$ is said to be ${\mathcal{L}}$-\emph{tame} if the following conditions are satisfied:
\begin{enumerate}
\item[(i)] The resolvent $(i-B_t)^{-1}$ continuously depends on~$t\in[0,1]$ in the operator norm.
\end{enumerate}
Next, there exists a $\delta>0$ such that
\begin{enumerate}
  \item[(ii)] For each $t\in[0,1]$, the spectrum of~$B_t$ on the interval $(-\delta,\delta)$ is purely discrete.
  \item[(iii)] For any $t\in[0,1]$ and any interval $J\subset(-\delta,\delta)$, one has
\begin{equation}\label{comm-norm}
  \norm{[P_{\mathcal{L}},E(B_t,J)]}<\frac14.
\end{equation}
\end{enumerate}
Here $[P_{\mathcal{L}},E(B_t,J)]=P_{\mathcal{L}} E(B_t,J)-E(B_t,J)P_{\mathcal{L}}$ is the commutator of~$P_{\mathcal{L}}$ and~$E(B_t,J)$.
\end{definition}

Let $\{B_t\}$, $t\in[0,1]$, be an ${\mathcal{L}}$-tame family.
By~(i) and~(ii), for some~$n$ there exists a~partition $0=t_0<t_1<t_2<\dotsm<t_{n+1}=1$ of the interval $[0,1]$ and numbers $\gamma_1,\dotsc,\gamma_{n+1}\in(-\delta,\delta)$ such that $\gamma_j$ does not lie in the spectrum $\operatorname{Spec}(B_t)$ of the operator $B_t$ for $t\in[t_{j-1},t_j]$, $\gamma_1=\gamma_{n+1}\le 0$, and if $\gamma_1<0$, then the half-open interval $[\gamma_1,0)$ does not contain any points of spectrum of $B_0$ and $B_1$. Let ${\mathcal{V}}_j={\mathcal{V}}(B_{t_j},\gamma_j,\gamma_{j+1})$ be the linear span of eigenvectors of~$B_{t_j}$ corresponding to the eigenvalues lying between~$\gamma_j$ and~$\gamma_{j+1}$. On the subspace ${\mathcal{V}}_j$, consider the quadratic form
\begin{equation}\label{quad-form}
  A_j[u]=(u,(2P_{\mathcal{L}}-1)u),\quad u\in {\mathcal{V}}_j.
\end{equation}
Let $m_{j+}=\sigma_+(A_j)$ be the positive index of inertia of the form~\eqref{quad-form}, i.e., the dimension of the positive subspace ${\mathcal{V}}_{j+}\subset {\mathcal{V}}_j$ of this form.

\begin{definition}\label{def-spf}
The \textit{partial spectral flow} of the ${\mathcal{L}}$-tame family $\{B_t\}$, $t\in[0,1]$,
\emph{along}~${\mathcal{L}}$ is the number
\begin{equation}\label{e06}
    \operatorname{sf}_{\mathcal{L}}\{B_t\}
    =\sum_{j=1}^nm_{j+}\operatorname{sign}(\gamma_j-\gamma_{j+1}).
\end{equation}
\end{definition}

\begin{remark}\label{rk-sf}
The definition of ``traditional'' spectral flow in the form given in~\cite{BLP1,NSScS99} is the special case of Definition~\ref{def-spf} for ${\mathcal{L}}={\mathcal{H}}$. Here condition~(iii) in Definition~\ref{d-Ltame} is satisfied automatically, and the numbers~$m_{j+}$ become the dimensions~$m_j$ of the eigenspaces~${\mathcal{V}}_j$.

For the general case of ${\mathcal{L}}\subsetneq{\mathcal{H}}$, the subspace~${\mathcal{V}}_{j+}$ can be thought of as the part of~${\mathcal{V}}_j$ ``close'' to the subspace~${\mathcal{L}}$.
\end{remark}

Some properties of the partial spectral flow are stated in the following theorem.

\begin{theorem}\label{th1}
\textup{(a)} Let $\{B_t\}$, $t\in[0,1]$, be an ${\mathcal{L}}$-tame family of self-adjoint operators. The  partial spectral flows $\operatorname{sf}_{\mathcal{L}}\{B_t\}$ and $\operatorname{sf}_{{\mathcal{L}}^\perp}\{B_t\}$ are well defined, and
\begin{equation}\label{compl}
    \operatorname{sf}_{\mathcal{L}}\{B_t\}+\operatorname{sf}_{{\mathcal{L}}^\perp}\{B_t\}=\operatorname{sf}\{B_t\}.
\end{equation}
\textup{(b) (}homotopy invariance of the partial spectral flow\textup) Let $\{B(t,\tau)\}$ be a two-parameter family of self-adjoint operators satisfying conditions~\textup{(i)--(iii)} in Definition~\textup{\ref{d-Ltame}} in which $t\in[0,1]$ is everywhere replaced with $(t,\tau)\in[0,1]\times[0,1]$. If
\begin{equation}\label{weak-iso}
    \operatorname{sf}_{\mathcal{L}}\{B(0,t)\}=
    \operatorname{sf}_{\mathcal{L}}\{B(1,t)\},
\end{equation}
then
\begin{equation*}
    \operatorname{sf}_{\mathcal{L}}\{B(t,0)\}=
    \operatorname{sf}_{\mathcal{L}}\{B(t,1)\}.
\end{equation*}
\end{theorem}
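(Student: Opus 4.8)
For part~(a), the plan is to first extract the single quantitative consequence of condition~(iii) on which everything rests, and then to run the classical well-definedness argument for spectral flow with the dimensions $\dim{\mathcal{V}}_j$ replaced by the indices of inertia $\sigma_\pm(A_j)$. The key lemma I would establish is: \emph{if $\norm{[P_{\mathcal{L}},E]}<\tfrac14$ for an orthogonal projection~$E$, then, viewed on $\operatorname{range}E$, the self-adjoint operator $E(2P_{\mathcal{L}}-1)E$ has no spectrum in the open interval $\bigl(-\tfrac{\sqrt3}2,\tfrac{\sqrt3}2\bigr)$.} Indeed, recall that for two orthogonal projections $P,Q$ one has $\norm{PQ-QP}=\sup\sqrt{c(1-c)}$, the supremum over the spectrum $c$ of $PQP$ restricted to $\operatorname{range}P$; taking $P=E$ and $Q=P_{\mathcal{L}}$, the hypothesis forces every such~$c$ to obey $c(1-c)<\tfrac1{16}$, i.e.\ $c\notin\bigl(\tfrac{2-\sqrt3}4,\tfrac{2+\sqrt3}4\bigr)$, which is exactly the asserted gap for $2EP_{\mathcal{L}}E-1$. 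Applied with $E=E(B_{t_j},J)$, $J$ the interval between $\gamma_j$ and $\gamma_{j+1}$, this shows that the quadratic form~\eqref{quad-form} on~${\mathcal{V}}_j$ is nondegenerate with $\abs{A_j[u]}\ge\tfrac{\sqrt3}2\norm{u}^2$; in particular $m_{j+}=\sigma_+(A_j)$ is unambiguous and $\sigma_+(A_j)+\sigma_-(A_j)=\dim{\mathcal{V}}_j$, with no null part.

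The next step is the cocycle identity that replaces additivity of dimension: fixing $B=B_{t_j}$ and $\gamma<\gamma'<\gamma''$ outside $\operatorname{Spec}(B)$ and writing $E_1,E_2$ for the orthogonal projections onto ${\mathcal{V}}(B,\gamma,\gamma')$ and ${\mathcal{V}}(B,\gamma',\gamma'')$, one has $\sigma_\pm\bigl({\mathcal{V}}(B,\gamma,\gamma'')\bigr)=\sigma_\pm\bigl({\mathcal{V}}(B,\gamma,\gamma')\bigr)+\sigma_\pm\bigl({\mathcal{V}}(B,\gamma',\gamma'')\bigr)$. The mechanism: ${\mathcal{V}}(B,\gamma,\gamma'')$ is the orthogonal sum of the two eigenvector-spans, and the cross term $(u',(2P_{\mathcal{L}}-1)u''')=2\bigl(u',[E_1,P_{\mathcal{L}}]E_2u'''\bigr)$ (using $E_1E_2=0$) has modulus $<\tfrac12\norm{u'}\,\norm{u'''}$, strictly dominated by the diagonal gap $\tfrac{\sqrt3}2$; hence $A$ is positive definite on ${\mathcal{V}}(B,\gamma,\gamma')_+\oplus{\mathcal{V}}(B,\gamma',\gamma'')_+$ and negative definite on the analogous sum of negative subspaces, so $\sigma_+$ and $\sigma_-$ are both superadditive for this decomposition, and superadditivity on both sides together with $\sigma_++\sigma_-=\dim$ forces equality. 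Via ${\mathcal{V}}(B,\alpha,\beta)={\mathcal{V}}(B,\beta,\alpha)$ and the usual case analysis this becomes the full cocycle identity for $m_+$ weighted by $\operatorname{sign}$, and from there the well-definedness of $\operatorname{sf}_{\mathcal{L}}\{B_t\}$ is the standard argument, now with the cocycle identity and the spectral gap of the key lemma playing the roles of additivity of dimension and of discreteness (existence of admissible $\gamma$'s for a fine enough partition uses (i), (ii) and compactness; refining by repeating a $\gamma$-value contributes~$0$ through the vanishing $\operatorname{sign}$ factor; moving the $\gamma$'s is absorbed by the cocycle, using also that over a subinterval on which two $\gamma$-values stay off the spectrum the corresponding $\sigma_+$ is constant in~$t$ by the gap; the endpoint normalization is handled as in~\cite{BLP1,NSScS99}). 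Finally $\norm{[P_{\mathcal{L}^\perp},E]}=\norm{[P_{\mathcal{L}},E]}$ makes the family $\mathcal{L}^\perp$-tame, and since $2P_{\mathcal{L}^\perp}-1=-(2P_{\mathcal{L}}-1)$ the form~\eqref{quad-form} for $\mathcal{L}^\perp$ is minus that for $\mathcal{L}$, so $m_{j+}^{\mathcal{L}^\perp}=\sigma_-(A_j^{\mathcal{L}})=\dim{\mathcal{V}}_j-m_{j+}^{\mathcal{L}}$; summing $m_{j+}^{\mathcal{L}}+m_{j+}^{\mathcal{L}^\perp}=\dim{\mathcal{V}}_j$ against $\operatorname{sign}(\gamma_j-\gamma_{j+1})$ and comparing with the ordinary spectral flow (Remark~\ref{rk-sf}) gives~\eqref{compl}.

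For part~(b) the plan is the classical ``spectral flow around a null-homotopic loop vanishes'' argument. First I would record that $\operatorname{sf}_{\mathcal{L}}$ is additive under concatenation and changes sign under reversal of a path, both immediate from Definition~\ref{def-spf} and the well-definedness just proved. The crux is the lemma that \emph{the partial spectral flow along a loop contained in a sufficiently small square of the parameter plane equals~$0$.} To prove it I would localize: over a small enough square choose $\eta\in(0,\delta)$ with $\pm\eta\notin\operatorname{Spec}(B(t,\tau))$ throughout, so that the spaces $\operatorname{range}E(B(t,\tau),(-\eta,\eta))$ form a norm-continuous bundle of constant finite rank; trivializing it reduces the computation of $\operatorname{sf}_{\mathcal{L}}$ along paths in the square to a norm-continuous family $C(t,\tau)$ of self-adjoint operators on a fixed finite-dimensional space together with the compression of $P_{\mathcal{L}}$, whose spectral projection $\Pi(t,\tau)$ for the half-line $(\tfrac12,\infty)$ is --- by the key lemma of part~(a) --- a norm-continuous family of projections, which (the square being contractible) may be taken constant, $\Pi\equiv\Pi_0$, after one more conjugation. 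In this finite-dimensional constant-$\Pi_0$ model, the near-zero eigenvalue branches of~$C$ split, by the $\tfrac{\sqrt3}2$-gap and connectedness of the parameter interval, into those whose eigenvectors stay uniformly $\Pi_0$-heavy and those that stay uniformly $(1-\Pi_0)$-heavy; only the former contribute to the partial spectral flow along $\operatorname{range}\Pi_0$, which then equals the net number of zero-crossings of those branches, and this vanishes because the branches return to their initial configuration around the loop. With the lemma in hand I would subdivide $[0,1]\times[0,1]$ into a uniform grid fine enough that every cell is such a small square, so that by concatenation additivity the partial spectral flow along $\partial\bigl([0,1]\times[0,1]\bigr)$ equals the sum over cells of the partial spectral flows along their boundaries, hence~$0$; cutting $\partial\bigl([0,1]\times[0,1]\bigr)$ into its four edges with the correct orientations then yields $0=\operatorname{sf}_{\mathcal{L}}\{B(t,0)\}+\operatorname{sf}_{\mathcal{L}}\{B(1,t)\}-\operatorname{sf}_{\mathcal{L}}\{B(t,1)\}-\operatorname{sf}_{\mathcal{L}}\{B(0,t)\}$, and hypothesis~\eqref{weak-iso} cancels the second and fourth terms, leaving $\operatorname{sf}_{\mathcal{L}}\{B(t,0)\}=\operatorname{sf}_{\mathcal{L}}\{B(t,1)\}$.

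I expect the main obstacle to be this small-loop lemma, specifically the behaviour at parameter values where near-zero eigenvalues of~$C$ collide: there the individual branches need not vary continuously, and one must check that a branch which is $\Pi_0$-heavy on one side of the collision remains $\Pi_0$-heavy on the other. This is exactly where the quantitative strength of condition~(iii) --- the constant $\tfrac14$, which through the key lemma produces the definite gap $\bigl(-\tfrac{\sqrt3}2,\tfrac{\sqrt3}2\bigr)$ and hence a rigid dichotomy between the $\mathcal{L}$-heavy and the $\mathcal{L}^\perp$-heavy part of every spectral subspace near~$0$ --- is indispensable: at a collision one passes to the combined eigenspace, applies condition~(iii) to it, and uses norm-continuity of its $\Pi_0$-heavy part to conclude that the number of $\Pi_0$-heavy branches is conserved. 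A more routine but still delicate point is to make the localization in part~(b) precise --- that $\operatorname{sf}_{\mathcal{L}}$ along a path depends only on the spectral data near~$0$ and is transported by the trivializing unitaries --- which forces one to revisit the endpoint-normalization clauses in Definition~\ref{def-spf}.
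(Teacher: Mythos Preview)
Your proof is correct and follows the same architecture as the paper's: for~(a), nondegeneracy of the form~$A_{\mathcal{V}}$, additivity of $\sigma_+$ under orthogonal splittings, and constancy of $\sigma_+$ under continuous deformation of the spectral subspace; for~(b), vanishing of $\operatorname{sf}_{\mathcal{L}}$ around the boundary loop of the square. The paper packages~(a) into four short lemmas (your ``key lemma'' is its Lemma on nonsingularity, proved there by a one-line triangle-inequality estimate giving the gap $1-2\varepsilon$ rather than your sharper $\sqrt3/2$ from Halmos two-projection theory; your cocycle identity is its additivity lemma for $\dim_{\mathcal{L}}$; your constancy-in-$t$ step is its Lemma on continuous families).

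The only substantive divergence is in how you localize in~(b). You reduce to a finite-dimensional model, track individual eigenvalue branches, and worry about collisions --- this is where you flag the main obstacle. The paper bypasses this entirely: it simply observes that for a sufficiently small perturbation of the loop one may keep the \emph{same} partition $\{t_j\}$ and the \emph{same} levels $\gamma_j$, whereupon each space ${\mathcal{V}}_j$ varies norm-continuously (contour integral of the resolvent) and hence each $m_{j+}$ is unchanged by the constancy lemma; thus $\operatorname{sf}_{\mathcal{L}}$ is locally constant along the contraction of the loop to a point. No eigenvalue-branch bookkeeping is needed because the argument never leaves the level of spectral projections onto intervals. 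Your route would work, but the collision difficulty you anticipate is an artifact of tracking branches; if you rewrite your small-loop lemma using only the projections $E(B_{(t,\tau)},J)$ and the constancy-of-$\sigma_+$ lemma, the obstacle disappears.
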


The proof of this theorem, as well as some more details concerning the partial spectral flow, will be given in Sec.~\ref{s4}.

\subsection{Aharonov--Bohm effect for the tight-binding Hamiltonian}

Here we will show that although the spectral flow $\operatorname{sf}\{\widehat H_t\}$ is zero, there is nonetheless a nontrivial motion of eigenvalues as $t$ varies from~$0$ to~$1$. Namely, there exist subspaces ${\mathcal{L}},{\mathcal{L}}'\subset{\mathcal{H}}_a$ consisting of functions localized in the momentum space near the Dirac points~$K$ and~$K'$, respectively, and such that the partial spectral flows of the family $\{\widehat H_t\}$ along these subspaces coincide with the spectral flows~\eqref{spfl-0} of the respective families of Dirac operators.

Our first task will be to define these subspaces, and to this end we introduce a basis in~${\mathcal{H}}_a$. Consider the set~$G_0$ of pairs $(m,n)$ of integers such that

(a)~$-N\le n\le N-1$;

(b)~$M+1\le m\le 3M-1$ if $-N\le n\le-N/2$ or $N/2<n\le N-1$;

(c)~$M\le m\le 3M$ if $-N/2<n\le N/2$.

It is easily seen that $G_0$ contains exactly $4MN$ elements.
\begin{lemma}[\textnormal{see Sec.\!~\ref{ssA1} for the proof}]\label{l202}
The functions
\begin{equation}\label{basis1}
  \varphi_{mn}(x)=
  \begin{cases}
    e^{i\tfrac{\pi m}L x_1
    +i\tfrac{2\pi n}l x_2},& x\in X_B,\\
    e^{-i\tfrac{\pi m}L x_1
    +i\tfrac{2\pi n}l x_2},& x\in X_A,
  \end{cases}
  \quad (m,n)\in G_0,
\end{equation}
form an orthonormal basis in ${\mathcal{H}}_a$.
\end{lemma}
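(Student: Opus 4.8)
The plan is to reduce the statement to elementary (finite‑Fourier) sums. Since, as already noted, $G_0$ has exactly $4MN$ elements, and $\dim{\mathcal{H}}_a=4MN$ as well (the lattice $X_a$ has $4MN$ vertices), it suffices to prove that the $\varphi_{mn}$ are orthonormal in the inner product~\eqref{ipXA}; they then form a basis for dimensional reasons. (That each $\varphi_{mn}$ is well defined on the tube is clear, since $e^{2\pi i n x_2/l}$ has period $l$.) To compute the inner products I would first make the geometry explicit. Using~\eqref{bc} together with the tiling of $X$ by $3a\times\sqrt3a$ rectangles (Fig.~\ref{fig02}), one checks that $X_B$ consists of the columns $x_1=\tfrac a2+3ak$ on which $x_2\in\sqrt3a\{0,\dots,N-1\}$, together with the columns $x_1=2a+3ak$ on which $x_2\in\tfrac{\sqrt3a}2+\sqrt3a\{0,\dots,N-1\}$, with $k=0,\dots,M-1$ in both cases; likewise $X_A$ consists of the columns $x_1=\tfrac{5a}2+3ak$ (with $x_2\in\sqrt3a\{0,\dots,N-1\}$) and $x_1=a+3ak$ (with $x_2\in\tfrac{\sqrt3a}2+\sqrt3a\{0,\dots,N-1\}$). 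These are precisely the $4MN$ sites of $X_a$, and the fictitious sites are exactly the ones appended by~\eqref{bc} at the two zigzag ends.

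Substituting~\eqref{basis1} into~\eqref{ipXA} and summing over each column separately makes the axial ($k$) and circumferential ($j$) sums decouple. Writing $p=m'-m$, $r=n'-n$, $\theta=\pi p/(6M)$, $\Sigma^M_p=\sum_{k=0}^{M-1}e^{i\pi pk/M}$, $\Sigma^N_r=\sum_{j=0}^{N-1}e^{2\pi irj/N}$, and collecting the contributions of $X_B$ and of $X_A$, one obtains
\begin{equation*}
  4MN\,(\varphi_{mn},\varphi_{m'n'})=\Sigma^N_r\bigl(1+e^{3i\theta}e^{i\pi r/N}\bigr)\bigl(e^{i\theta}\Sigma^M_p+e^{-5i\theta}\,\overline{\Sigma^M_p}\bigr),
\end{equation*}
the three factors coming respectively from the $j$-sums, the fractional $x_2$-offsets of the two column families, and the $k$-sums (those over $X_A$ contributing $\overline{\Sigma^M_p}$ because of the sign in the exponent of $\varphi_{mn}$ on $X_A$). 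The assertion is thus that this product equals $4MN$ when $p=r=0$ (then $\Sigma^N_0=N$, the middle factor is $2$, the last is $2M$) and vanishes otherwise.

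For the vanishing I use three elementary facts: $\Sigma^N_r=N$ if $N\mid r$ and $0$ otherwise; $\Sigma^M_p=M$ if $2M\mid p$, $\Sigma^M_p=0$ if $p$ is even with $2M\nmid p$, and $\Sigma^M_p=-2/(e^{i\pi p/M}-1)\neq0$ if $p$ is odd; and, crucially, the identity $\overline{\Sigma^M_p}=-e^{i\pi p/M}\Sigma^M_p$ for odd $p$. Since $\pi p/M=6\theta$, the last identity gives $e^{-5i\theta}\overline{\Sigma^M_p}=-e^{-5i\theta}e^{6i\theta}\Sigma^M_p=-e^{i\theta}\Sigma^M_p$, so the third factor in the display vanishes; hence the product is $0$ for \emph{every} odd $p$. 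For even $p$ one invokes the ranges defining $G_0$: clause (a) gives $r\in[-(2N-1),2N-1]$, so $N\mid r$ forces $r\in\{-N,0,N\}$; the overall range of $m$ gives $p\in[-2M,2M]$, and $|p|=2M$ means $\{m,m'\}=\{M,3M\}$, which by clauses (b) and (c) puts both $m,m'$ in clause (c) and hence $n,n'\in(-N/2,N/2]$, i.e.\ $|r|\le N-1$. Therefore, for even $p$ with $(m,n)\neq(m',n')$: if $2M\nmid p$ then $\Sigma^M_p=0$; if $p=0$ and $r\neq0$ then either $\Sigma^N_r=0$ or $r=\pm N$ and the middle factor $1+e^{i\pi r/N}=0$; if $|p|=2M$ then $e^{3i\theta}=e^{\pm i\pi}=-1$, so $r=0$ kills the middle factor while $r\neq0$ forces $\Sigma^N_r=0$ since $|r|\le N-1$. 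In every case the product is $0$, which completes the proof.

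I expect the only genuinely delicate point to be the combinatorial bookkeeping of the admissible pairs $(p,r)$ for $(m,n),(m',n')\in G_0$ --- in particular the implication $|p|=2M\Rightarrow|r|\le N-1$, for which the $n$-dependent bounds in clauses (b) and (c) are precisely designed (this is also what makes $\#G_0=N(2M-1)+N(2M+1)=4MN$ come out right). Everything else is the routine evaluation of geometric sums sketched above. It is worth double‑checking the geometric input (the listed coordinates of the $A$‑ and $B$‑sites and their compatibility with~\eqref{bc}), since the exponents $e^{i\theta}$, $e^{3i\theta}$, $e^{i\pi r/N}$ in the displayed identity depend on it.
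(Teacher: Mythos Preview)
Your proof is correct, and the approach is genuinely different from the paper's. The paper proceeds conceptually: it introduces the reflection $\omega\colon(x_1,x_2)\mapsto(-x_1,x_2)$ and the doubled torus $\widetilde X=[-L,L]\times[0,l]$, observes that the lattice $\widetilde X_B=X_B\cup\omega(X_A)$ extends $X_B$ across the zigzag edge, and builds a unitary $V\colon{\mathcal H}_a\to\ell^2(\widetilde X_B)$ under which $\varphi_{mn}$ becomes the single exponential $e^{i\pi m x_1/L+i2\pi n x_2/l}$ on $\widetilde X_B$. It then records the transformation rule $\varphi_{(m,n)+je_1+ke_2}=e^{-2\pi i(j+k)/3}\varphi_{mn}$ with $e_1=(2M,N)$, $e_2=(2M,-N)$, and uses it to translate $G_0$ to the rectangle $G_1=\{-M\le m<M,\,-N\le n<N\}$, for which orthonormality is immediate because $\widetilde X_B$ is a product of two cyclic lattices of orders $2M$ and $2N$.

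Your argument, by contrast, stays on $X_a$ and computes the inner products directly via finite geometric sums; the factorisation you display and the case analysis (odd $p$; even $p$ with $2M\nmid p$; $p=0$; $|p|=2M$) are all correct, and the key combinatorial input $|p|=2M\Rightarrow|r|\le N-1$ is exactly the content of clauses~(b)--(c). What the paper's route buys is a structural explanation of \emph{why} the odd-looking index set $G_0$ works---it is a fundamental domain for the lattice $\mathbb Z e_1+\mathbb Z e_2$---and this transformation rule is reused later in the paper. What your route buys is a fully elementary argument requiring no auxiliary constructions, at the price of some bookkeeping. Both are complete.
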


To simplify the exposition, we will assume that $N$ is a multiple of~$3$. Set
\begin{equation*}
  \overline{m}=2M,\quad \overline{n}=\frac N3.
\end{equation*}
Note that the $\varphi_{mn}$ can be rewritten in the form
\begin{equation}\label{basis-K}
\begin{aligned}
  \varphi_{mn}(x)&=e^{i\langle K,x\rangle}
  \begin{cases}
    e^{i\tfrac{\pi(m-\overline{m})}L x_1
    +i\tfrac{2\pi(n-\overline{n})}l x_2},& x\in X_B,\\
    e^{\tfrac{2\pi i}{3}}e^{i\tfrac{\pi(\overline{m}-m)}L x_1
    +i\tfrac{2\pi(n-\overline{n})}l x_2},& x\in X_A,
  \end{cases}
\\
&=e^{i\langle K',x\rangle}
  \begin{cases}
    e^{i\tfrac{\pi(m-\overline{m})}L x_1
    +i\tfrac{2\pi(n+\overline{n})}l x_2},& x\in X_B,\\
    e^{\tfrac{2\pi i}{3}}e^{i\tfrac{\pi(\overline{m}-m)}L x_1
    +i\tfrac{2\pi(n+\overline{n})}l x_2},& x\in X_A.
  \end{cases}
\end{aligned}
\end{equation}
Thus, the function $\varphi_{mn}$ with $m=\overline{m}$ and $n=\overline{n}$ (or $n=-\overline{n}$) is just the exponential $e^{i\langle K,x\rangle}$ (or $e^{i\langle K',x\rangle}$) with the additional phase factor $e^{\tfrac{2\pi i}{3}}$ on sublattice~$A$. Accordingly, the $\varphi_{mn}$ with $(m,n)$ close to $(\overline{m},\pm\overline{n})$ are localized in the momentum space near the Dirac points~$K$ and~$K'$.

Take some $d>0$ and define subspaces~${\mathcal{L}},{\mathcal{L}}'\subset {\mathcal{H}}_a$ as the linear spans
\begin{align}\label{cL}
  {\mathcal{L}} &=\operatorname{Lin}\{\varphi_{mn}\colon (m-\overline{m})^2+(n-\overline{n})^2\le d^2\},
\\ \label{cLp}
  {\mathcal{L}}' &=\operatorname{Lin}\{\varphi_{mn}\colon (m-\overline{m})^2+(n+\overline{n})^2\le d^2\}.
\end{align}
The domains corresponding to~${\mathcal{L}}$ and~${\mathcal{L}}'$ in the momentum space are shown in Fig.~\ref{fig07}.

Now we are in a position to state the main theorem of the present paper.

\begin{theorem}\label{th-main}
There exists a $d>0$ \textup(which may depend on the family~$\mathbf{B}(t)$\textup) such that, for all sufficiently small $a>0$, the family~$\widehat H_t$ is ${\mathcal{L}}$-, ${\mathcal{L}}'$-, and $({\mathcal{L}}\oplus{\mathcal{L}}')$-tame, and
\begin{align*}
  \operatorname{sf}_{{\mathcal{L}}}\{\widehat H_t\} &=\operatorname{sf} \{\widehat D_t\}, \qquad
  \operatorname{sf}_{{\mathcal{L}}'}\{\widehat H_t\} =\operatorname{sf} \{\widehat D'_t\},
\\
   &\operatorname{sf}_{({\mathcal{L}}
   \oplus{\mathcal{L}}')^\perp}\{\widehat H_t\}=0.
\end{align*}
\end{theorem}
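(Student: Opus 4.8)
The plan is to show that, once $a$ is small, the spectrum of $\widehat H_t$ in a (shrinking) neighbourhood $(-\delta,\delta)$ of zero is produced entirely by the two Dirac valleys, the $K$-states lying essentially inside $\mathcal L$ and the $K'$-states inside $\mathcal L'$, and then to read off the three partial spectral flows from Definition~\ref{def-spf} using a single partition of $[0,1]$ adapted at the same time to $\widehat H_t$, $\widehat D_t$ and $\widehat D'_t$. I would begin with the field switched off, where the shift operators $e^{i\langle\delta_j,\widehat p\rangle}$ and the closure rule~\eqref{bc} preserve the circumferential frequency $2\pi n/l$, so that $\widehat H_0$ is block diagonal in $\mathcal H_a=\bigoplus_n\mathcal H_{a,n}$, $\mathcal H_{a,n}$ spanned by the $\varphi_{mn}$ with the given $n$, and on each block equals the multiplication by the restriction of the symbol $\gamma_0T(p)$ to the line $p_2=2\pi n/l$, modified by boundary terms of small rank coming from~\eqref{bc} (a reflection in $m$). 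Since $T$ vanishes only at $K,K'$, with $|\gamma_0T(p)|\ge c'\gamma_0|p-\widetilde K|$ near a Dirac point $\widetilde K$ and $|\gamma_0T(p)|\ge c\gamma_0$ away from both, a Toeplitz-type estimate handling the boundary terms should yield constants $D,c_0>0$ such that, for all small $a$ and all $t\in[0,1]$, every eigenvalue of $\widehat H_t$ with $|E|<c_0\gamma_0 D/M$ has eigenvector $o(1)$-close (as $a\to0$) to the span of the $\varphi_{mn}$ lying within distance $D$ of $(\overline m,\overline n)$ or of $(\overline m,-\overline n)$; the field enters only through the Peierls phases $e^{-i\Theta_j(x,t)}$ with $\Theta_j=\int_0^1\langle\delta_j,\mathrm A(x+\tau\delta_j,t)\rangle\,d\tau=O(a)$, a norm-$O(\gamma_0a)$ perturbation negligible against the gap $\sim\gamma_0 D/M$ for $D$ large.

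Inside the $D$-neighbourhood of a Dirac point, substituting~\eqref{tbh2D} and keeping only the linear part of the Taylor expansion of $T(\widetilde K+\Delta p)$ turns $\widehat H_t$ (with~\eqref{bc} going over into~\eqref{bmbc}) into $D^\pm(\widehat p-\mathrm A(x,t))$, the errors being $O(a^2|\Delta p|^2)$ from the Taylor remainder and $O(a^2)$ from replacing the line-integral Peierls phase by the minimal coupling $\widehat p_j-A_j$, both $o(1)$ relative to the eigenvalue scale $\gamma_0a|\Delta p|\sim\gamma_0/M$; moreover the inter-valley matrix elements are $O(a^\infty)$, since $\overline{\varphi_{mn}}\varphi_{m'n'}$ with $n\approx\overline n$, $n'\approx-\overline n$ carries a frequency $\sim1/a$ far outside the smooth symbols $e^{-i\Theta_j}$ (and vanishes identically at $t=0$, where $\mathrm A=0$). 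Fixing $d>0$ large enough, depending on the family through $\sup_t\|\mathrm A(t)\|$, so that every zero mode of every $D^\pm(\widehat p-\mathrm A(x,t))$, $t\in[0,1]$, has momentum inside the $d$-ball, and setting $\delta=\delta(a)=c_1\gamma_0ad/L$ with $c_1$ small, I obtain the following uniform-in-$t$ facts: every eigenvector of $\widehat H_t$ with $|E|<\delta$ is within $o(1)$ of $\mathcal L\oplus\mathcal L'$ and, separately, within $o(1)$ of $\mathcal L$ (if its level is of $K$-type) or of $\mathcal L'$ (if of $K'$-type); and on the $K$-sector $\widehat H_t$ is unitarily equivalent, up to an error $o(1)$ relative to the level spacing, to $\widehat D_t$ on its own low-energy subspace, and likewise with $K'\leftrightarrow\widehat D'_t$.

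Tameness is then routine: condition~(i) of Definition~\ref{d-Ltame} holds because $\mathcal H_a$ is finite-dimensional and $t\mapsto\widehat H_t$ is norm-continuous, condition~(ii) is automatic there, and for~(iii) one writes, for any $J\subset(-\delta,\delta)$, $E(\widehat H_t,J)=P_t^K+P_t^{K'}$ with $P_t^K$ a subprojection of a projection $o(1)$-close to $P_{\mathcal L}$ and $P_t^{K'}$ one $o(1)$-close to $P_{\mathcal L'}$; since $P_{\mathcal L}$ commutes with itself and annihilates $\mathcal L'$, $\|[P_{\mathcal L},E(\widehat H_t,J)]\|=o(1)<\tfrac14$ uniformly in $t$ and $J$, and symmetrically for $\mathcal L'$, $\mathcal L\oplus\mathcal L'$ and $(\mathcal L\oplus\mathcal L')^\perp$. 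To compute the flows I would pick a partition $0=t_0<\dots<t_{n+1}=1$ and levels $\gamma_j\in(-\delta,\delta)$ as in Definition~\ref{def-spf}, chosen to avoid the spectra of $\widehat H_t$, $\widehat D_t$ and $\widehat D'_t$ on the respective subintervals (possible since on $(-\delta,\delta)$ these spectra agree to within $o(1)$ of the spacing) and to avoid the finitely many $t$ at which a $K$- and a $K'$-level coincide. Then $\mathcal V_j=\mathcal V(\widehat H_{t_j},\gamma_j,\gamma_{j+1})$ splits orthogonally as $\mathcal V_j^K\oplus\mathcal V_j^{K'}$ with $\mathcal V_j^K$ within $o(1)$ of $\mathcal L$ and $\mathcal V_j^{K'}$ within $o(1)$ of $\mathcal L'$, so the form $A_j[u]=(u,(2P_{\mathcal L}-1)u)$ is positive definite on $\mathcal V_j^K$ and negative definite on $\mathcal V_j^{K'}$; hence $m_{j+}=\dim\mathcal V_j^K$, which by the previous paragraph equals the number of eigenvalues of $\widehat D_{t_j}$ strictly between $\gamma_j$ and $\gamma_{j+1}$. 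Substituting into~\eqref{e06} and comparing with~\eqref{e06} for $\mathcal L={\mathcal H}$ applied to $\{\widehat D_t\}$ (Remark~\ref{rk-sf}) gives $\operatorname{sf}_{\mathcal L}\{\widehat H_t\}=\operatorname{sf}\{\widehat D_t\}$; interchanging $\mathcal L$ and $\mathcal L'$ gives $\operatorname{sf}_{\mathcal L'}\{\widehat H_t\}=\operatorname{sf}\{\widehat D'_t\}$; and since $P_{(\mathcal L\oplus\mathcal L')^\perp}$ is $o(1)$ on every $\mathcal V_j$, the form $(u,(2P_{(\mathcal L\oplus\mathcal L')^\perp}-1)u)$ is negative definite there, so all $m_{j+}=0$ and $\operatorname{sf}_{(\mathcal L\oplus\mathcal L')^\perp}\{\widehat H_t\}=0$. (As a consistency check, Theorem~\ref{th1}(a) and~\eqref{spfl-0} give $\operatorname{sf}_{\mathcal L}\{\widehat H_t\}+\operatorname{sf}_{\mathcal L^\perp}\{\widehat H_t\}=\operatorname{sf}\{\widehat H_t\}=0$ and $\operatorname{sf}\{\widehat D_t\}+\operatorname{sf}\{\widehat D'_t\}=-q+q=0$.)

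The hard part will be the second paragraph together with the gap estimate of the first: the uniform-in-$t$ quantitative comparison of the tight-binding operator with the closure rule~\eqref{bc} to the Dirac operators with the Berry--Mondragon conditions~\eqref{bmbc}, and in particular ruling out spurious near-zero eigenvalues generated by the boundary away from the Dirac points — a Toeplitz/Hankel type phenomenon which the specific self-adjoint closure~\eqref{bc} is designed to avoid but which still has to be checked.
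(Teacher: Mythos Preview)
Your route is genuinely different from the paper's. The paper does \emph{not} work perturbatively with the general potential $\mathrm A(x,t)$: its first move (step~\textbf{B}) is a gauge transformation $U_t=e^{iF(x,t)}$ that replaces $\mathrm A(x,t)$ by the \emph{constant} potential $\mathrm A_0(t)=(0,\Phi(t)/l)$, so that $\widehat H_t=U_t\widehat H_{0t}U_t^{-1}$. The constant-potential Hamiltonian $\widehat H_{0t}$ is then \emph{exactly} block-diagonal in the $\varphi_{mn}$ basis (formulas~\eqref{Hphi}--\eqref{Hphi-ast}), the low-energy eigenvectors sit \emph{exactly} in finitely many one-dimensional subspaces $\mathcal W_{\pm\overline n+j}$, $|j|\le\overline q$ (Lemma~\ref{EVP-est}), and~$\mathcal L$ is an \emph{exact} invariant subspace of $\widehat H_{0t}$. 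Tameness of $\{\widehat H_t\}$ is reduced to commutator estimates $\|[P_{\mathcal L},U_tP_kU_t^{-1}]\|$ controlled by the Fourier decay of the smooth function $e^{iF}$ (Lemma~\ref{SOS}); the partial spectral flow is computed by first passing, via the homotopy invariance of Theorem~\ref{th1}(b), from $\widehat H_t$ to $\widehat H_{0t}$, where $\operatorname{sf}_{\mathcal L}=\operatorname{sf}$ of the restriction to $\mathcal L$, and then showing that this restriction converges in norm to the restriction of $\widehat D_{0t}$ to a matching finite-dimensional subspace $\widetilde{\mathcal L}$. No direct matching of spectra of $\widehat H_t$ and $\widehat D_t$ is needed.

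Your direct perturbative scheme may be salvageable, but as written it has a scale problem you should confront. You treat the Peierls phases as a norm-$O(\gamma_0 a)$ perturbation of $\widehat H_0$ and declare it ``negligible against the gap $\sim\gamma_0 D/M$''. With $M=L/(3a)$ the ratio is $L/(3D)$, a \emph{fixed} constant; it can be made small by taking $D$ (hence $d$) large, but it is not $o(1)$ as $a\to0$, so all your subsequent ``$o(1)$'' claims (eigenvectors $o(1)$-close to $\mathcal L$, commutators $o(1)$, spectra agreeing to $o(1)$ of the spacing) are overstated and would have to be replaced by ``$<1/4$ for $d$ large enough'', with the constants tracked. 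In particular, your ``Toeplitz-type estimate'' must produce a quantitative localisation with explicit dependence on $D$, and the matching of partitions and $\gamma_j$'s between $\widehat H_t$ and $\widehat D_t$ needs the eigenvalue error to be small compared to the actual gaps of $\widehat D_t$ on $[t_{j-1},t_j]$, not just the typical spacing. The paper sidesteps all of this: after gauging, nothing is approximate until the very last step (norm convergence of finite matrices on a fixed $\widetilde{\mathcal L}$), and the passage back to $\widehat H_t$ is handled by homotopy invariance rather than spectral matching.
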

\begin{figure}[ht]
\centering
\includegraphics{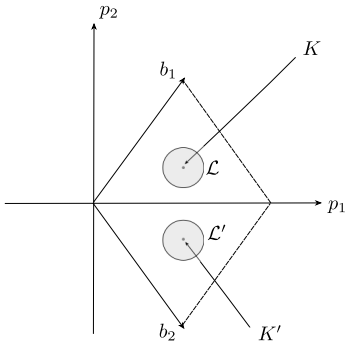}
\caption{Domains in the momentum space corresponding to the subspaces~${\mathcal{L}}$ and~${\mathcal{L}}'$ (shown by dashed disks)}
\label{fig07}
\end{figure}

Thus, informally speaking, all nontrivial spectral flow in concentrated near the Dirac points~$K$ and~$K'$ in the momentum space, and the partial spectral flows of the tight-binding Hamiltonian near these points are equal to the spectral flows provided by the respective Dirac approximations.

\subsection{Proof of Theorem~\ref{th-main}}\label{ss34}

We will only prove the assertion of the theorem for the subspace~${\mathcal{L}}$. The proof for the subspace~${\mathcal{L}}'$ is, mutatis mutandis, essentially the same. As to the claim for the subspace~$({\mathcal{L}}\oplus{\mathcal{L}}')^\perp$, it readily follows from Lemmas~\ref{EVP-est} and~\ref{SOS} below; we omit the details.

To make the proof more readable, we have transferred some technical computations to~\ref{AppA}.

\paragraph{\textbf{A.}}  First, note that the specific value of~$\gamma_0$
does not affect the assertion of the theorem in any way, because the spectral flow, as well as the partial spectral flow, does not change if the operator family is multiplied by a positive constant. Thus, we can take any $\gamma_0>0$ convenient to us instead of the actual, physically meaningful value, and from now on we set $\gamma_0=\frac2{3a}$ so as to ensure that the factor $\frac{3a\gamma_0}2$ occurring in formulas~\eqref{Dir} for the Dirac operators is equal to unity.

\paragraph{\textbf{B.}} Let $\Phi(t)=2\pi q(t)$ be the flux of the field~$\mathbf{B}(t)$ through the tube, $q(0)=0$, $q(1)=q\in\mathbb{Z}$. The potentials ${\mathrm{A}}(x,t)$ and ${\mathrm{A}}_0(t)=(0,\Phi(t) l^{-1})$ (the latter being independent of~$x$) generate the same flux, and hence there exists a smooth real-valued function $F(x,t)$ on $X\times[0,1]$ such that $\nabla_x F={\mathrm{A}}-{\mathrm{A}}_0$. The corresponding gauge transformation $\psi\mapsto U_t^{-1}\psi$, where $U_t$ is the operator of multiplication by $e^{iF(x,t)}$, reduces the family $\widehat H_t=H(\widehat p-{\mathrm{A}}(x,t))$ to the family $\widehat H_{0t}=H(\widehat p-{\mathrm{A}}_0(t))$ of operators with constant magnetic potential:
\begin{equation}\label{sim-tbh}
  \widehat H_t
  =U_t H(\widehat p-{\mathrm{A}}_0(t)) U_t^{-1}\equiv
  U_t \widehat H_{0t} U_t^{-1}.
\end{equation}

\paragraph{\textbf{C.}} It follows from~\eqref{sim-tbh} that any eigenvector of~$\widehat H_t$ has the form $U_t\psi$, where $\psi$~is an eigenvector of~$\widehat H_{0t}$ with the same eigenvalue. Let us study the eigenvalue problem for the operator~$\widehat H_{0t}$. The operator~$\widehat H_{0t}$ acts on the basis vectors $\varphi_{mn}$ by the formulas
\begin{align}\label{Hphi}
  \widehat H_{0t}\varphi_{mn}&=\mu(m,n,t)\varphi_{2\overline{m}-m,n},
\\\label{Hphi-ast}
  \widehat H_{0t}\varphi_{2\overline{m}-m,n}&=\mu(2\overline{m}-m,n,t)\varphi_{mn}
  =\mu^*(m,n,t)\varphi_{mn},
\end{align}
$(m,n)\in G_0$. (These formulas are proved in Sec.\!~\ref{ssA2}, where we give explicit expressions for~$\mu(m,n,t)$.) It follows from \eqref{Hphi} and~\eqref{Hphi-ast} that~${\mathcal{H}}_a$ splits into the orthogonal direct sum of two-dimensional invariant subspaces
\begin{equation*}
{\mathcal{V}}_{mn}=\operatorname{Lin}\{\varphi_{mn},\varphi_{2\overline{m}-m,n}\},\qquad
(m,n)\in G_0,\quad m>\overline{m},
\end{equation*}
and one-dimensional invariant subspaces
\begin{equation*}
{\mathcal{W}}_n=\operatorname{Lin}\{\varphi_{\overline{m} n}\},\qquad
-N\le n\le N-1.
\end{equation*}
On the subspace~${\mathcal{V}}_{mn}$, the operator~$\widehat H_{0t}$ is represented by the $2\times 2$ antidiagonal matrix with antidiagonal entries $\mu(m,n,t)$ and $\mu^*(m,n,t)$, and hence
the eigenvalues of~$\widehat H_{0t}$ on~${\mathcal{V}}_{mn}$ are $\pm\abs{\mu(m,n,t)}$. The eigenvalue of~$\widehat H_{0t}$ on~${\mathcal{W}}_n$ is $\mu(\overline{m},n,t)$.

\begin{lemma}[\textnormal{see Sec.\!~\ref{ssA3} for the proof}]\label{EVP-est}
There exists numbers $\delta,\overline{q},a_0>0$ such that if $a<a_0$ and $\psi$ is an eigenvector of~$\widehat H_{0t}$ with eigenvalue $\lambda$ satisfying $-\delta<\lambda<\delta$, then
\begin{equation*}
  \psi\in\bigoplus_{j=-\overline{q}}^{\overline{q}} \bigl({\mathcal{W}}_{j+\overline{n}}\oplus{\mathcal{W}}_{j-\overline{n}}\bigr).
\end{equation*}
\end{lemma}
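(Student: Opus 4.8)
The plan is to exploit the explicit block decomposition of $\widehat H_{0t}$ that is set up just before the statement of the lemma and to show that, of all the invariant blocks, only a bounded number of the one‑dimensional blocks ${\mathcal W}_n$ — those with $n$ within a fixed distance of $\pm\overline n$ — can carry an eigenvalue in $(-\delta,\delta)$, while the two‑dimensional blocks carry none for small $a$. Throughout we use the normalisation $\gamma_0=\tfrac2{3a}$ adopted in the proof of Theorem~\ref{th-main} and the relations $L=3aM$, $l=\sqrt3aN$. By the formulas for $\mu(m,n,t)$ worked out in Sec.~\ref{ssA2} together with ${\mathrm{A}}_0(t)=(0,\Phi(t)/l)$, one finds that $\abs{\mu(m,n,t)}=\gamma_0\abs{T}$, where $\abs{T}^2$ is a quadratic polynomial in the single variable $c=\cos\theta_n(t)$, with $\theta_n(t)=\tfrac{\pi n}{N}-\tfrac{\Phi(t)}{2N}$, whose coefficients depend on $m$ only through $\beta=\cos\tfrac{\pi m}{2M}$; explicitly $\abs{T}^2=4c^2+4\beta c+1$.

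\textbf{Step 1: the two‑dimensional blocks are bounded away from zero.} On ${\mathcal V}_{mn}$ (with $\overline m<m\le 3M$) the eigenvalues of $\widehat H_{0t}$ are $\pm\abs{\mu(m,n,t)}$. Minimising $4c^2+4\beta c+1$ over all $c\in[-1,1]$ gives $\abs{T}^2\ge 1-\beta^2=\sin^2\tfrac{\pi m}{2M}$, and for $m\in(\overline m,3M]$ this is smallest at $m=\overline m+1$, so that, uniformly in $n$ and $t$,
\[
  \abs{\mu(m,n,t)}\ \ge\ \gamma_0\sin\frac{\pi}{2M}\ =\ \frac{2}{3a}\sin\frac{3\pi a}{2L},
\]
which tends to $\pi/L$ as $a\to0$. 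Hence, fixing $\delta=\pi/(2L)$, there is an $a_0>0$ such that for all $a<a_0$ none of the two‑dimensional blocks has an eigenvalue in $(-\delta,\delta)$; consequently every eigenvector $\psi$ of $\widehat H_{0t}$ with eigenvalue in $(-\delta,\delta)$ lies in $\bigoplus_n{\mathcal W}_n$.

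\textbf{Step 2: which one‑dimensional blocks survive, and uniformity in $t$.} On ${\mathcal W}_n$ the eigenvalue of $\widehat H_{0t}$ equals $\mu(\overline m,n,t)$, which is real, and the computation above with $m=\overline m$ (so $\beta=\cos\pi=-1$) collapses to $\abs{\mu(\overline m,n,t)}=\gamma_0\abs{2\cos\theta_n(t)-1}$. Thus $\abs{\mu(\overline m,n,t)}<\delta=\pi/(2L)$ forces $\cos\theta_n(t)$ to lie within $O(a)$ of $\tfrac12$, i.e.\ $\theta_n(t)$ to lie within $O(a)$ of one of the two points $\pm\pi/3$, the only solutions of $\cos\theta=\tfrac12$ in $(-\pi,\pi)$. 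Since consecutive values $\theta_n(t)$ differ by $\pi/N=\sqrt3\pi a/l$, which is of the same order $O(a)$, only a bounded number $\overline q_0$ of values of $n$ occur near each of $\pm\pi/3$, and they are the integers $n$ with $\tfrac{\pi n}{N}$ close to $\pm\tfrac\pi3+\tfrac{\Phi(t)}{2N}$, i.e.\ $n$ within a bounded distance of $\pm\overline n+\Phi(t)/(2\pi)=\pm\overline n+q(t)$. Because $\Phi(t)$ is continuous on the compact interval $[0,1]$ it is bounded, $\abs{q(t)}\le Q$ with $Q$ depending only on the family $\mathbf{B}(t)$; hence each surviving $n$ satisfies $\abs{n-\overline n}\le\overline q$ or $\abs{n+\overline n}\le\overline q$ with $\overline q$ a constant independent of $a$. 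Together with Step~1 this gives $\psi\in\bigoplus_{j=-\overline q}^{\overline q}({\mathcal W}_{j+\overline n}\oplus{\mathcal W}_{j-\overline n})$, which is the assertion.

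\textbf{Main obstacle.} There is no conceptual difficulty here; the one thing to watch is that several quantities that individually degenerate as $a\to0$ — the hopping parameter $\gamma_0\sim1/a$, the lattice sizes $M,N\sim1/a$, the level spacing near a Dirac point $\sim a$, and the energy gap of a two‑dimensional block $\sim a\cdot\gamma_0\sim1$ — combine into $a$‑independent bounds, so that $\delta$ and $\overline q$ can be chosen uniformly in $a$ (for $a<a_0$) and in $t\in[0,1]$; the remaining input, the closed form of $\mu(m,n,t)$, is the elementary trigonometry carried out in Sec.~\ref{ssA2}.
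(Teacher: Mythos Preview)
Your proof is correct and follows essentially the same route as the paper's: you use the block decomposition of $\widehat H_{0t}$, bound the eigenvalues on the two-dimensional blocks $\mathcal V_{mn}$ below by $\gamma_0\sin\frac{\pi}{2M}\to\pi/L$, and then show that on the one-dimensional blocks $\mathcal W_n$ the eigenvalue $\gamma_0(1-2\cos\theta_n(t))$ can be small only when $n$ is within a bounded (in $a$) distance of $\pm\overline n+q(t)$. The only cosmetic differences are that you obtain the $\sin^2$ bound by minimising the quadratic $4c^2+4\beta c+1$ rather than by taking the imaginary part directly, and that you work with $\delta=\pi/(2L)$ throughout instead of the paper's $\min\{\pi/(2L),\pi/l\}$; both choices are harmless.
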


\paragraph{\textbf{D.}} We need to prove that the family $\widehat H_{0t}$ of self-adjoint operators is ${\mathcal{L}}$-tame for sufficiently large~$d$. Conditions~(i) and~(ii) in Definition~\ref{d-Ltame} are trivially satisfied, because the family continuously depends on~$t$ and acts on the finite-dimensional space~${\mathcal{H}}_a$. To verify~(iii), take an arbitrary interval $J\subset(-\delta,\delta)$ and fix a $t\in[0,1]$. The orthogonal projection $E(\widehat H_t,J)$ onto the linear span of eigenvectors of~$\widehat H_t$ corresponding to the eigenvalues lying in~$J$ has the form
\begin{equation*}
  E(\widehat H_t,J)=U_t E(\widehat H_{0t},J) U_t^{-1}.
\end{equation*}
In turn, it follows from Lemma~\ref{EVP-est} and the invariance of the subspaces~${\mathcal{W}}_n$ with respect to $\widehat H_{0t}$ that
\begin{equation*}
  E(\widehat H_{0t},J)=\sum_{k\in R}P_k,
\end{equation*}
where $R \subset R_{\overline{q}}=\{k\in\mathbb{Z}\colon \abs{\overline{n}-k}\le\overline{q}\text{ or }
\abs{\overline{n}+k}\le\overline{q}\}$ is some subset (depending on~$t$ and~$J$) and $P_k$ is the orthogonal projection onto~${\mathcal{W}}_k$.
Accordingly,
\begin{equation}\label{commu}
  [P_{\mathcal{L}},E(\widehat H_t,J)]=\sum_{k\in R}[P_{\mathcal{L}},\widetilde P_k], \;
  \text{where}\quad \widetilde P_k=U_t P_k U_t^{-1}.
\end{equation}

\begin{lemma}[\textnormal{see Sec.\!~\ref{ssA4} for the proof}]\label{SOS}
There exists an integer $d>0$ such that, for the space~$\mathcal{L}$ defined in~\eqref{cL} with this~$d$,
\begin{equation*}
  \norm{[P_{\mathcal{L}},\smash{\widetilde P_k}]}<\frac{1}{4(4\overline{q}+2)}\qquad\text{for all}\quad k\in R_{\overline{q}}
\end{equation*}
for all sufficiently small $a$. Similar estimates hold for the commutators with~$P_{{\mathcal{L}}'}$.
\end{lemma}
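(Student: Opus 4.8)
The plan is to reduce the estimate to a localization statement for a single unit vector in the basis $\{\varphi_{mn}\}$, to be proved by elementary Fourier analysis on the honeycomb lattice. Observe first that $\widetilde P_k=U_tP_kU_t^{-1}$ is the orthogonal projection onto the one-dimensional subspace spanned by $\psi_k:=U_t\varphi_{\overline m,k}$, a \emph{unit} vector because $U_t$ (multiplication by $e^{iF(x,t)}$) is unitary on $\mathcal H_a$. For the commutator of an orthogonal projection with a rank-one projection one has the elementary identity
\begin{equation*}
  \norm{[P_{\mathcal L},\widetilde P_k]}=\norm{P_{\mathcal L}\psi_k}\cdot\norm{(1-P_{\mathcal L})\psi_k},
\end{equation*}
and both factors lie in $[0,1]$, so it suffices to make one of them less than $\frac1{4(4\overline q+2)}$. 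The set $R_{\overline q}$ splits into the indices $k=\overline n+j$ and $k=-\overline n+j$, $\abs j\le\overline q$, which for small $a$ lie near the Dirac points $K$ and $K'$ respectively; I would bound $\norm{(1-P_{\mathcal L})\psi_k}$ in the first case and $\norm{P_{\mathcal L}\psi_k}$ in the second.

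Expand $\psi_k=\sum_{(m,n)\in G_0}c^{(k)}_{mn}\varphi_{mn}$, $c^{(k)}_{mn}=(\varphi_{mn},U_t\varphi_{\overline m,k})$; by Lemma~\ref{l202} this is an orthonormal expansion, so $\sum_{(m,n)}\abs{c^{(k)}_{mn}}^2=1$ and $\norm{(1-P_{\mathcal L})\psi_k}^2$ (resp.\ $\norm{P_{\mathcal L}\psi_k}^2$) is the sum of $\abs{c^{(k)}_{mn}}^2$ over $(m-\overline m)^2+(n-\overline n)^2>d^2$ (resp.\ $\le d^2$). For $k=\overline n+j$ a direct computation from~\eqref{basis-K} shows that the carrier $e^{i\langle K,x\rangle}$ and the phase $e^{2\pi i/3}$ on sublattice~$A$ cancel between $\varphi_{mn}$ and $\psi_k$, leaving
\begin{equation*}
  c^{(k)}_{mn}=\frac1{4MN}\Bigl(\sum_{x\in X_B}+\sum_{x\in X_A}\Bigr)e^{iF(x,t)}\,e^{\pm i\pi(m-\overline m)x_1/L}\,e^{2\pi i(k-n)x_2/l},
\end{equation*}
the sign being tied to the sublattice. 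This is a discrete oscillatory sum of the \emph{fixed} smooth function $e^{iF(\cdot,t)}$ — it does not depend on $a$, and depends on $t\in[0,1]$ and the integer $j$, $\abs j\le\overline q$, only through a compact, resp.\ finite, set. Smoothness of $F$ forces these coefficients to decay in $(m-\overline m,\,k-n)$, hence (as $k$ is near $\overline n$) in $(m-\overline m)^2+(n-\overline n)^2$, fast enough to be square-summable with a tail that is $o(1)$ as the cutoff grows, uniformly in all small $a$, in $t$, and in $\abs j\le\overline q$. I would therefore fix an integer $d$ (depending only on the family $\mathbf B(t)$, through $F$ and $\overline q$) making $\norm{(1-P_{\mathcal L})\psi_k}^2<\bigl(\frac1{4(4\overline q+2)}\bigr)^2$ for every $k=\overline n+j$ with $\abs j\le\overline q$, every $t$, and every sufficiently small $a$.

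For $k=-\overline n+j$ I would instead use~\eqref{basis-K} with the $K$-form for $\varphi_{mn}$ — legitimate since $(m,n)$ now runs over the fixed finite disk about $(\overline m,\overline n)$ — and the $K'$-form for $\varphi_{\overline m,k}$; this time the carriers do not cancel and
\begin{equation*}
  c^{(k)}_{mn}=\frac1{4MN}\Bigl(\sum_{x\in X_B}+\sum_{x\in X_A}\Bigr)e^{i\langle K'-K,\,x\rangle}\,w^{(k)}_{mn}(x),
\end{equation*}
where $w^{(k)}_{mn}$, the product of $e^{iF(\cdot,t)}$ with a slowly varying wave packet, is once more a fixed smooth function. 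On each sublattice the function $x\mapsto e^{i\langle K'-K,x\rangle}$ runs cyclically, along a primitive lattice direction, through a constant times $1,\,e^{2\pi i/3},\,e^{4\pi i/3}$ — this is exactly the statement that $K$ and $K'$ sit at the corners of the Brillouin zone — so, tested against the slowly varying $w^{(k)}_{mn}$, each of the two sums is $O(a)$. Hence $c^{(k)}_{mn}=O(a)$ and, the disk containing $O(d^2)$ indices, $\norm{P_{\mathcal L}\psi_k}^2=O(d^2a^2)$. With $d$ already fixed, I would choose $a_0>0$ small enough that $\norm{P_{\mathcal L}\psi_k}<\frac1{4(4\overline q+2)}$ for all $k=-\overline n+j$, $\abs j\le\overline q$, all $t$, and $a<a_0$, and also small enough that the estimate of the previous paragraph holds. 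This gives $\norm{[P_{\mathcal L},\widetilde P_k]}<\frac1{4(4\overline q+2)}$ for all $k\in R_{\overline q}$; the estimates for $P_{\mathcal L'}$ follow verbatim with $K$ and $K'$ interchanged.

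The hard part is the decay estimate for the discrete oscillatory sums in the $K$-localized case — making rigorous the heuristic that a fixed smooth function sampled on a lattice of spacing tending to $0$ has its honeycomb expansion coefficients concentrated at low frequency. The delicate points are: keeping all relevant frequencies safely below the lattice Nyquist scale $\sim 1/a$; ruling out aliasing of the residual high frequencies back into the disk about $(\overline m,\overline n)$, for which I expect to use that the index set $G_0$ has been arranged so that the would-be aliasing indices fall outside $G_0$; and carrying along the honeycomb bookkeeping, namely the opposite signs of the $x_1$-frequency on the two sublattices and the offset between them. The remaining ingredients — the rank-one identity, the cube-root-of-unity cancellation in the $K'$-localized case, and the quantifier order ($d$ first, then $a_0$) — are routine.
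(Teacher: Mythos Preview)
Your plan is correct and, for the $K$-localized indices $k=\overline n+j$, essentially identical to the paper's. Your rank-one identity $\norm{[P_{\mathcal L},\widetilde P_k]}=\norm{P_{\mathcal L}\psi_k}\,\norm{(1-P_{\mathcal L})\psi_k}$ is a sharpening of the paper's cruder bound $\norm{[P_{\mathcal L},\widetilde P_k]}\le 2\min\bigl(\norm{P_{\mathcal L}\psi_k},\norm{(1-P_{\mathcal L})\psi_k}\bigr)$, but both reduce the lemma to the same quantity. Your coefficients $c^{(k)}_{mn}$ coincide (up to a unimodular factor from the transformation rule~\eqref{trans-rule}) with the coefficients $b(m-\overline m,\,n-k,\,t)$ of the expansion $e^{iF(\cdot,t)}=\sum b(m,n,t)\varphi_{mn}$, so the ``hard part'' you single out is exactly Proposition~\ref{p05} of the paper, whose proof runs precisely along the lines you anticipate: one extends $e^{iF}$ evenly in $x_1$ to the torus $\widetilde X$ (this is what absorbs the sign flip of the $x_1$-frequency on the two sublattices), integrates by parts twice in each variable to get $\abs{c(m,n,t)}\le C_1(1+m^2)^{-1}(1+n^2)^{-1}$ for the continuous Fourier coefficients, and then controls the aliasing sum over the lattice generated by $e_1,e_2$ to obtain $\abs{b(m,n,t)}\le C(1+\rho(m,n)^2)^{-1}$ uniformly in small~$a$.

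The one genuine difference is your treatment of the $K'$-localized indices $k=-\overline n+j$. You propose a direct oscillatory-cancellation argument: the carrier $e^{i\langle K'-K,x\rangle}$ cycles through cube roots of unity along a primitive lattice direction, so a three-term Abel summation against the slowly varying weight gives $c^{(k)}_{mn}=O(a)$ and hence $\norm{P_{\mathcal L}\psi_k}^2=O(d^2a^2)$. The paper instead reuses the single decay estimate of Proposition~\ref{p05}: since multiplication by $\varphi_{\overline m k}$ shifts indices by $(\overline m,k)$, the coefficients of $P_{\mathcal L}\psi_k$ are again $b(m',n',t)$'s, but now with $\rho(m',n')\ge 2\overline n-d-\overline q$, which tends to infinity as $a\to0$. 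Both arguments are correct; yours is more hands-on and makes the physical separation of the valleys explicit, while the paper's has the economy of proving one decay lemma and applying it to both cases. Your quantifier order (fix $d$ from the first case, then shrink $a$) matches the paper's.
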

Since the number of terms in the sum in~\eqref{commu} does not exceed~$4\overline{q}+2$, we see that condition~(iii) holds.

\paragraph{\textbf{E.}} Consider the two-parameter family $\widehat H_{t,\tau}$ defined by the formula $\widehat H_{t,\tau}=U_{\tau t}\widehat H_{0t}U_{\tau t}^{-1}$. This is a homotopy between the ${\mathcal{L}}$-tame families $\widehat H_{0t}=\widehat H_{t,0}$ and $\widehat H_t=\widehat H_{t,1}$. We have 
\begin{equation}\label{step1}
  \operatorname{sf}_{\mathcal{L}} \widehat H_t=\operatorname{sf}_{\mathcal{L}}\widehat H_{0t} 
\end{equation}
by Theorem~\ref{th1},\,(b).

\paragraph{\textbf{F.}} It readily follows from~\eqref{Hphi}, \eqref{Hphi-ast}, and the definition of~${\mathcal{L}}$ that ${\mathcal{L}}$ is an invariant subspace of the operators $\widehat H_{0t}$. Hence the partial spectral flow of the family $\{\widehat H_{0t}\}$ along~${\mathcal{L}}$ is equal to the usual spectral flow of the restriction of this family to ${\mathcal{L}}$,
\begin{equation}\label{step2}
  \operatorname{sf}_{\mathcal{L}}\{\widehat H_{0t}\}=\operatorname{sf} \{\widehat H_{0t}\big|_{{\mathcal{L}}}\}.
\end{equation}
(Although the space~${\mathcal{L}}$ is finite-dimensional, the right-hand side need not be zero, because the restrictions of the operators $\widehat H_{00}$ and $\widehat H_{01}$ to ${\mathcal{L}}$ are not necessarily isospectral.)

\paragraph{\textbf{G.}} Now let us study the spectral flow of the Dirac operator. The same gauge transformation as in~\emph{\textbf{B}},\footnote{Strictly speaking, not exactly the same; here we deal with functions defined on $X$, while~\emph{\textbf{B}} deals with lattice functions defined on~$X_a\subset X$.} $\psi\mapsto U_t^{-1}\psi$, where $U_t$ is the operator of multiplication by $e^{iF(x,t)}$, reduces the family $\widehat D_t=D(\widehat p-{\mathrm{A}}(x,t))$ to the family $\widehat D_{0t}=D(\widehat p-{\mathrm{A}}_0(t))$,
\begin{equation*}
  \widehat D_t
  =U_t D(\widehat p-{\mathrm{A}}_0(t)) U_t^{-1}\equiv
  U_t \widehat D_{0t} U_t^{-1}.
\end{equation*}
Using the homotopy $\widehat D_{t,\tau}=U_{t\tau} \widehat D_{0t} U_{t\tau}^{-1}$, we conclude that
\begin{equation}\label{step3}
  \operatorname{sf}\{\widehat D_t\}=\operatorname{sf}\{\widehat D_{0t}\}.
\end{equation}
The vector functions
\begin{equation}\label{basis2}
  u_{mn}(x)= \begin{pmatrix}
e^{i\tfrac{\pi m}L x_1
    +i\tfrac{2\pi n}l x_2}\\
    -ie^{-i\tfrac{\pi m}L x_1
    +i\tfrac{2\pi n}l x_2}
  \end{pmatrix},\; x\in X,
  \quad m,n\in\mathbb{Z},
\end{equation}
form an orthonormal basis in~${\mathcal{H}}_0$ and satisfy the boundary conditions~\eqref{bmbc} (see Sec.\!~\ref{ssA5}). Hence they lie in the domain of the Dirac operators. The subspace
\begin{equation*}
  \widetilde{\mathcal{L}} =\operatorname{Lin}\{u_{mn}\colon m^2+n^2\le d^2\}\subset{\mathcal{H}}_0,
\end{equation*}
as well as its orthogonal complement $\widetilde{\mathcal{L}}^\perp$, is invariant with respect to~$\widehat D_{0t}$, and the restriction of $\widehat D_{0t}$ to $\widetilde{\mathcal{L}}^\perp$ is boundedly invertible (see Sec.\!~\ref{ssA6}).
Hence the spectral flow of $\{\widehat D_{0t}\}$ is equal to that of its  restriction to~$\widetilde{\mathcal{L}}$,
\begin{equation}\label{A}
  \operatorname{sf}\{\widehat D_{0t}\}=\operatorname{sf}\{\widehat D_{0t}\big|_{\widetilde{\mathcal{L}}}\}.
\end{equation}

\paragraph{\textbf{H.}}
Consider the mapping $W\colon{\mathcal{H}}\to{\mathcal{H}}_a$ given by the formula
\begin{equation*}
  W\begin{pmatrix}
     u_B \\
     u_A 
   \end{pmatrix}
   =\begin{pmatrix}
     \varphi_B \\
     \varphi_A
   \end{pmatrix},
\end{equation*}
where 
\begin{align*}
   \varphi_B(x)&=\bigl[e^{i\langle K,x\rangle}u_B(x)\bigr]\big|_{X_B},
\\
   \varphi_A(x)&=e^{-\tfrac{5\pi}6i}\bigl[e^{i\langle K,x\rangle}u_A(x)\bigr]\big|_{X_A}.
\end{align*}
This mapping can also be described by the formula  
\begin{equation*}
 W(u_{mn})=\varphi_{\overline{m}+m,\overline{n}+n}, \qquad m,n\in\mathbb{Z},
\end{equation*}
and hence its restriction to~$\widetilde{\mathcal{L}}$ (which we denote by the same letter~$W$) is an isomorphism onto the subspace~${\mathcal{L}}$.

\paragraph{\textbf{I.}} 
Since ${\mathcal{L}}$ is $\widehat H_{0t}$-invariant, it follows that the operator
\begin{equation*}
  \widehat R_t=W^{-1}\widehat H_{0t}\big|_{\mathcal{L}} W\colon\widetilde{\mathcal{L}}\longrightarrow\widetilde{\mathcal{L}}
\end{equation*}
is well defined, and 
\begin{equation}\label{B}
\operatorname{sf}\{\widehat H_{0t}\big|_{\mathcal{L}}\}=\operatorname{sf}\{\widehat R_t\}.
\end{equation}

\paragraph{\textbf{K.}} 
Now note that $\widehat R_t\to\widehat D_{0t}\big|_{\widetilde{\mathcal{L}}}$ in the operator norm uniformly with respect to $t\in[0,1]$ as $a\to0$ (see Sec.\!~\ref{ssA7}).
This also implies the resolvent convergence, because $\widetilde{\mathcal{L}}$ is finite-dimensional. It follows that
\begin{equation}\label{C}
  \operatorname{sf}\{\widehat R_t\}=\operatorname{sf}\{\widehat D_{0t}\big|_{\widetilde{\mathcal{L}}}\}
\end{equation}
for sufficiently small~$a$, because the spectral projections of $\widehat R_t$ converge to those of $\widehat D_{0t}\big|_{\widetilde{\mathcal{L}}}$ and hence the partition $0=t_0<t_1<t_2<\dotsm<t_{n+1}=1$ of the interval $[0,1]$ and the numbers $\gamma_1,\dotsc,\gamma_{n+1}\in(-\delta,\delta)$ in the definition of spectral flow can be chosen to be the same for $\{\widehat R_t\}$ and  $\{\widehat D_{0t}\big|_{\widetilde{\mathcal{L}}}\}$.

\medskip

Now we combine~\eqref{step1}, \eqref{step2}, \eqref{step3}, \eqref{A}, \eqref{B}, and~\eqref{C} and conclude that
\begin{equation*}
  \operatorname{sf}\{\widehat H_{0t}\big|_{\mathcal{L}}\}=\operatorname{sf}\{\widehat D_{0t}\}.
\end{equation*}

The proof of Theorem~\ref{th-main} is complete. \qed

\section{Partial spectral flow: Details}\label{s4}

The aim of this section is to give more insight into the notion of partial spectral flow and provide a proof of Theorem~\ref{th1}. A~key point in the concept of partial spectral flow is given by condition~(iii) in Definition~\ref{d-Ltame}, which states that the commutator of projections onto two subspaces is sufficiently small. We study some properties following from such smallness in Sec.~\ref{ss41} and then use the results in Sec.~\ref{ss42} to prove Theorem~\ref{th1}.

\subsection{Almost reducible subspaces}\label{ss41}

Let ${\mathcal{H}}$ be a Hilbert space with inner product $(\,\boldsymbol\cdot\,,\,\boldsymbol\cdot\,)$,
and let ${\mathcal{L}}\subset {\mathcal{H}}$ be a subspace. A subspace ${\mathcal{V}}\subset {\mathcal{H}}$ is said
to be \textit{reducible} (with respect to~${\mathcal{L}}$, or, more precisely,
with respect to the decomposition ${\mathcal{H}}={\mathcal{L}}\oplus {\mathcal{L}}^\perp$) if
\begin{equation*}
    {\mathcal{V}}=({\mathcal{V}}\cap {\mathcal{L}})\oplus ({\mathcal{V}}\cap {\mathcal{L}}^\perp).
\end{equation*}
This is obviously equivalent to the condition
$[P_{\mathcal{L}},P_{\mathcal{V}}]=0$, where $[A,B]=AB-BA$ is the commutator of operators~$A$ and~$B$.
\begin{definition}\label{d01}
Let $\varepsilon\ge0$. We say that a subspace ${\mathcal{V}}\subset {\mathcal{H}}$ is
$\varepsilon$-\textit{reducible} with respect to ${\mathcal{L}}$ (or simply
$\varepsilon$-\textit{reducible}, provided that ${\mathcal{L}}$ is clear from the
context) if
\begin{equation*}
    \norm{[P_{\mathcal{L}},P_{\mathcal{V}}]}\le\varepsilon.
\end{equation*}
\end{definition}
We will also say for brevity that ${\mathcal{V}}$ is \textit{almost reducible} if it is $\varepsilon$-reducible with a sufficiently small $\varepsilon$, where being ``sufficiently small'' means that $\varepsilon<\varepsilon_0$, where $\varepsilon_0>0$ depends on the context. Namely, each of the subsequent assertions is true for some $\varepsilon_0>0$, and we need all of them (or part of them) be true for almost reducible subspaces, so we just take the minimum of all the corresponding~$\varepsilon_0$.

Consider the quadratic form $A[u]=A(u,u)$ on~${\mathcal{H}}$ associated with the Hermitian form
\begin{equation}\label{e01}
    A(u,v)=(u,P_{\mathcal{L}} v)-(u,P_{{\mathcal{L}}^\perp}v)
          \equiv (u,(2P_{\mathcal{L}}-1)v).
\end{equation}
Let ${\mathcal{V}}\subset {\mathcal{H}}$ be a finite-dimensional subspace. By $A_{\mathcal{V}}[u]$ we
denote the restriction of the form $A[u]$ to ${\mathcal{V}}$.

\begin{lemma}\label{p01}
Assume that ${\mathcal{V}}$ is $\varepsilon$-reducible with $\varepsilon<\frac12$. Then the form
$A_{\mathcal{V}}$ is nonsingular, and if ${\mathcal{V}}={\mathcal{V}}_+\oplus {\mathcal{V}}_-$ is the decomposition
of ${\mathcal{V}}$ into the positive and negative subspaces of this form, then
\begin{equation}\label{e02}
    \abs{A_{\mathcal{V}}[u]}\ge (1-2\varepsilon)\norm{u}^2, \qquad u\in {\mathcal{V}}_\pm.
\end{equation}
\end{lemma}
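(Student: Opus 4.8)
The plan is to exploit that $\varepsilon$-reducibility of ${\mathcal{V}}$ means the commutator $[P_{\mathcal{L}},P_{\mathcal{V}}]$ is small, so that on ${\mathcal{V}}$ the operator $2P_{\mathcal{L}}-1$ behaves, up to an error controlled by $\varepsilon$, like an operator squaring to the identity. First I would introduce the self-adjoint operator $T=2P_{\mathcal{L}}-1$ on ${\mathcal{H}}$, so that $T^2=1$, $\norm{T}=1$, and $A(u,v)=(u,Tv)$. The key observation is that for $u\in{\mathcal{V}}$ we want to compare $\norm{P_{\mathcal{V}} Tu}$ with $\norm{Tu}=\norm{u}$: since $A_{\mathcal{V}}[u]=(u,Tu)=(u,P_{\mathcal{V}} Tu)$ for $u\in{\mathcal{V}}$, we have $|A_{\mathcal{V}}[u]|\le\norm{u}\,\norm{P_{\mathcal{V}} Tu}$, but more importantly the compressed operator $S=P_{\mathcal{V}} T\big|_{\mathcal{V}}$ on ${\mathcal{V}}$ represents the form $A_{\mathcal{V}}$, and I would show $S^2$ is close to the identity on ${\mathcal{V}}$, with the discrepancy bounded by a multiple of $\varepsilon$.

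The second step is the estimate $\norm{S^2-\operatorname{Id}_{\mathcal{V}}}\le 2\varepsilon$ (or a comparable bound). To get this, write for $u\in{\mathcal{V}}$: $Tu=P_{\mathcal{V}} Tu+(1-P_{\mathcal{V}})Tu$, so $u=T^2u=TP_{\mathcal{V}} Tu+T(1-P_{\mathcal{V}})Tu$, and project back onto ${\mathcal{V}}$ to get $u=S^2u+P_{\mathcal{V}} T(1-P_{\mathcal{V}})Tu$. Thus $\norm{S^2u-u}\le\norm{(1-P_{\mathcal{V}})Tu}$, and it remains to bound $\norm{(1-P_{\mathcal{V}})TP_{\mathcal{V}}}$ — but $(1-P_{\mathcal{V}})TP_{\mathcal{V}}=(1-P_{\mathcal{V}})(2P_{\mathcal{L}}-1)P_{\mathcal{V}}=2(1-P_{\mathcal{V}})P_{\mathcal{L}} P_{\mathcal{V}}=2(1-P_{\mathcal{V}})[P_{\mathcal{L}},P_{\mathcal{V}}]$, since $(1-P_{\mathcal{V}})P_{\mathcal{V}}=0$. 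Hence $\norm{(1-P_{\mathcal{V}})TP_{\mathcal{V}}}\le 2\norm{[P_{\mathcal{L}},P_{\mathcal{V}}]}\le 2\varepsilon$, giving $\norm{S^2-\operatorname{Id}_{\mathcal{V}}}\le 2\varepsilon$.

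The third step is routine spectral theory for the self-adjoint operator $S$ on the finite-dimensional space ${\mathcal{V}}$. From $\norm{S^2-\operatorname{Id}_{\mathcal{V}}}\le 2\varepsilon<1$ it follows that $S^2\ge(1-2\varepsilon)\operatorname{Id}_{\mathcal{V}}>0$, so $S$ is invertible; in particular $0$ is not an eigenvalue of $S$, which is exactly the statement that the form $A_{\mathcal{V}}(u,v)=(u,Su)$ is nonsingular. Decomposing ${\mathcal{V}}={\mathcal{V}}_+\oplus{\mathcal{V}}_-$ into the spectral subspaces of $S$ for positive and negative eigenvalues gives the positive and negative subspaces of the form. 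For $u\in{\mathcal{V}}_\pm$ we have $A_{\mathcal{V}}[u]=(u,Su)$ with $Su$ having spectral components only of one sign, so $|A_{\mathcal{V}}[u]|=|(u,Su)|=(|S|u,u)\ge\sqrt{1-2\varepsilon}\,\norm{u}^2$ using $|S|=\sqrt{S^2}\ge\sqrt{1-2\varepsilon}\operatorname{Id}_{\mathcal{V}}$. This already gives a cleaner bound than~\eqref{e02}; to land exactly on the stated inequality $|A_{\mathcal{V}}[u]|\ge(1-2\varepsilon)\norm{u}^2$ one only needs $\sqrt{1-2\varepsilon}\ge 1-2\varepsilon$, which holds for $0\le 2\varepsilon\le 1$, i.e. for $\varepsilon\le\frac12$, with strict positivity when $\varepsilon<\frac12$. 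I do not expect any serious obstacle here; the one point requiring a little care is the identity $(1-P_{\mathcal{V}})P_{\mathcal{L}} P_{\mathcal{V}}=(1-P_{\mathcal{V}})[P_{\mathcal{L}},P_{\mathcal{V}}]$ that converts a one-sided compression of $P_{\mathcal{L}}$ into the commutator, which is where the hypothesis enters, and keeping the bookkeeping of constants consistent with the claimed $(1-2\varepsilon)$.
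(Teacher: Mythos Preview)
Your argument is correct. Both you and the paper work with the same self-adjoint operator $S=P_{\mathcal{V}}(2P_{\mathcal{L}}-1)\big|_{\mathcal{V}}$ representing $A_{\mathcal{V}}$, and both reduce the hypothesis to the commutator bound via the identity $(1-P_{\mathcal{V}})(2P_{\mathcal{L}}-1)P_{\mathcal{V}}=2(1-P_{\mathcal{V}})[P_{\mathcal{L}},P_{\mathcal{V}}]$. The difference is in how the eigenvalue bound for $S$ is extracted. The paper observes directly that for an eigenvector $u$ with $Su=\lambda u$ one has $(2P_{\mathcal{L}}-1)u=\lambda u+2[P_{\mathcal{L}},P_{\mathcal{V}}]u$; since $2P_{\mathcal{L}}-1$ is unitary, taking norms immediately yields $\abs{\lambda}\ge 1-2\varepsilon$, which is exactly~\eqref{e02}. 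You instead square: from $\norm{S^2-\operatorname{Id}_{\mathcal{V}}}\le 2\varepsilon$ you get $\abs{\lambda}\ge\sqrt{1-2\varepsilon}$, a slightly sharper intermediate bound which you then relax to $1-2\varepsilon$. The paper's route is a one-line computation that lands on the stated constant without the detour through $S^2$; your route is marginally longer but yields a better constant along the way and makes the ``$S$ is almost an involution'' heuristic explicit.
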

\begin{proof}
For brevity, write $P:=P_{\mathcal{L}}$ and $Q:=P_{\mathcal{V}}$. One has
\begin{equation*}
    A_v[u]=(u,Cu),\qquad u\in {\mathcal{V}},
\end{equation*}
where the self-adjoint operator $C\colon {\mathcal{V}}\longrightarrow {\mathcal{V}}$ corresponding to
the quadratic form $A_{\mathcal{V}}$ is given by $C=Q(2P-1)$. Let $u\in {\mathcal{V}}$ be
an eigenvector of~$C$, $Cu=\lambda u$. Thus, we have
\begin{align*}
    \lambda u=Q(2P-1)u&=(2P-1)u+2[Q,P]u, \quad\text{or}
\\
    (2P-1)u&=\lambda u+2[P,Q]u.
\end{align*}
The operator $2P-1$ is unitary, and $\norm{[P,Q]}\le\varepsilon$. Hence, by
the triangle inequality,
\begin{equation*}
    \norm{u}\le\abs{\lambda}\norm{u}+2\varepsilon\norm{u}
    \quad\Longrightarrow\quad
    \abs{\lambda}\ge 1-2\varepsilon.
\end{equation*}
Since $\varepsilon<\frac12$, we see that $\lambda\ne0$ (hence the form $A_{\mathcal{V}}$ is
nonsingular) and moreover,
\begin{equation*}
    \pm A_{\mathcal{V}}[u]\ge (1-2\varepsilon)\norm{u}^2\quad\text{on ${\mathcal{V}}_\pm$.}
\end{equation*}
The proof of the lemma is complete.
\end{proof}

We see that $\varepsilon_0=\frac12$ for this lemma.

\begin{definition}\label{d02}
If a finite-dimensional subspace ${\mathcal{V}}\subset {\mathcal{H}}$ satisfies the
assumptions of Lemma~\ref{p01}, then the \textit{dimension of ${\mathcal{V}}$
along} ${\mathcal{L}}$ is defined as
\begin{equation*}
    \dim_{\mathcal{L}} {\mathcal{V}}=\sigma_+(A_{\mathcal{V}}),
\end{equation*}
where $\sigma_+(A_{\mathcal{V}})$ is the positive index of inertia of the
form~$A_{\mathcal{V}}$ (i.e., the dimension of the subspace ${\mathcal{V}}_+$ in the
decomposition of~${\mathcal{V}}$ in Lemma~\ref{p01}).
\end{definition}

\begin{lemma}\label{p02}
If a subspace ${\mathcal{V}}\subset {\mathcal{H}}$ is $\varepsilon$-reducible with respect to~${\mathcal{L}}$,
then it is $\varepsilon$-reducible with respect to~${\mathcal{L}}^\perp$. Further, if
${\mathcal{V}}$ is finite-dimensional and $\varepsilon<\frac12$, then
\begin{equation*}
    \dim_{\mathcal{L}} {\mathcal{V}}+\dim_{{\mathcal{L}}^\perp}{\mathcal{V}}=\dim {\mathcal{V}}.
\end{equation*}
\end{lemma}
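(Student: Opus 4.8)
The plan is to derive everything from the single identity $P_{{\mathcal{L}}^\perp}=1-P_{\mathcal{L}}$. First I would note that
\begin{equation*}
  [P_{{\mathcal{L}}^\perp},P_{\mathcal{V}}]=[1-P_{\mathcal{L}},P_{\mathcal{V}}]=-[P_{\mathcal{L}},P_{\mathcal{V}}],
\end{equation*}
so the two commutators have the same operator norm. Hence $\varepsilon$-reducibility with respect to~${\mathcal{L}}$ is literally the \emph{same} condition as $\varepsilon$-reducibility with respect to~${\mathcal{L}}^\perp$; this proves the first assertion and, in particular, shows that when $\varepsilon<\frac12$ the hypotheses of Lemma~\ref{p01} hold for ${\mathcal{L}}^\perp$ as well, so that $\dim_{{\mathcal{L}}^\perp}{\mathcal{V}}$ is well defined.

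For the dimension identity I would compare the two Hermitian forms. Let $A'(u,v)=(u,(2P_{{\mathcal{L}}^\perp}-1)v)$ be the form attached to~${\mathcal{L}}^\perp$ as in~\eqref{e01}. The relation $2P_{{\mathcal{L}}^\perp}-1=2(1-P_{\mathcal{L}})-1=-(2P_{\mathcal{L}}-1)$ gives $A'=-A$, and hence $A'_{\mathcal{V}}=-A_{\mathcal{V}}$ on~${\mathcal{V}}$. Consequently the positive subspace of $A'_{\mathcal{V}}$ coincides with the negative subspace ${\mathcal{V}}_-$ of $A_{\mathcal{V}}$, so that $\dim_{{\mathcal{L}}^\perp}{\mathcal{V}}=\sigma_+(A'_{\mathcal{V}})=\sigma_-(A_{\mathcal{V}})$, the negative index of inertia of~$A_{\mathcal{V}}$.

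Finally I would invoke Lemma~\ref{p01}: since $\varepsilon<\frac12$, the form $A_{\mathcal{V}}$ is nonsingular, so ${\mathcal{V}}={\mathcal{V}}_+\oplus{\mathcal{V}}_-$ with no zero part, whence $\sigma_+(A_{\mathcal{V}})+\sigma_-(A_{\mathcal{V}})=\dim{\mathcal{V}}$. Combining, $\dim_{\mathcal{L}}{\mathcal{V}}+\dim_{{\mathcal{L}}^\perp}{\mathcal{V}}=\sigma_+(A_{\mathcal{V}})+\sigma_-(A_{\mathcal{V}})=\dim{\mathcal{V}}$, as claimed.

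There is no genuine obstacle here; the only point requiring care is that the additivity $\sigma_++\sigma_-=\dim{\mathcal{V}}$ really does rely on the nonsingularity supplied by Lemma~\ref{p01}, i.e.\ on the strict inequality $\varepsilon<\frac12$ guaranteeing that ${\mathcal{V}}_+$ and ${\mathcal{V}}_-$ exhaust~${\mathcal{V}}$. This also shows that $\varepsilon_0=\frac12$ works for the present lemma too.
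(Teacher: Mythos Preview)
Your proof is correct and follows essentially the same approach as the paper: the paper also derives everything from $P_{{\mathcal{L}}^\perp}=1-P_{\mathcal{L}}$, notes that the commutators differ only in sign, and observes that ${\mathcal{V}}_+$ and ${\mathcal{V}}_-$ exchange places when passing from ${\mathcal{L}}$ to ${\mathcal{L}}^\perp$. Your version simply spells out the form identity $A'_{\mathcal{V}}=-A_{\mathcal{V}}$ and the role of nonsingularity from Lemma~\ref{p01} more explicitly.
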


\begin{proof}
It suffices to note that $P_{{\mathcal{L}}^\perp}=1-P_{\mathcal{L}}$, so that
\begin{equation*}
 [P_{\mathcal{L}},P_{\mathcal{V}}] =-[P_{{\mathcal{L}}^\perp},P_{\mathcal{V}}],
\end{equation*}
and further that ${\mathcal{V}}_+$ and ${\mathcal{V}}_-$
exchange places when we pass from $\varepsilon$-reducibility with respect
to~${\mathcal{L}}$ to that with respect to~${\mathcal{L}}_\perp$.
\end{proof}

\begin{lemma}\label{p03}
Let ${\mathcal{V}}_j\subset {\mathcal{H}}$, $j=1,2$, be orthogonal $\varepsilon$-reducible
subspaces. Then their direct sum ${\mathcal{V}}_1\oplus {\mathcal{V}}_2$ is
$2\varepsilon$-reducible. If, moreover, they are finite-dimensional and
$\varepsilon<\frac14$, then
\begin{equation}\label{e05}
    \dim_{\mathcal{L}}({\mathcal{V}}_1\oplus {\mathcal{V}}_2)=\dim_{\mathcal{L}} {\mathcal{V}}_1+\dim_{\mathcal{L}} {\mathcal{V}}_2.
\end{equation}
\end{lemma}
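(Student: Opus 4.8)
The plan is to treat the two assertions separately. For the $2\varepsilon$-reducibility of ${\mathcal{V}}_1\oplus{\mathcal{V}}_2$, the key observation is that orthogonality of ${\mathcal{V}}_1$ and ${\mathcal{V}}_2$ gives $P_{{\mathcal{V}}_1\oplus{\mathcal{V}}_2}=P_{{\mathcal{V}}_1}+P_{{\mathcal{V}}_2}$, so that $[P_{\mathcal{L}},P_{{\mathcal{V}}_1\oplus{\mathcal{V}}_2}]=[P_{\mathcal{L}},P_{{\mathcal{V}}_1}]+[P_{\mathcal{L}},P_{{\mathcal{V}}_2}]$ and the triangle inequality immediately yields the bound $2\varepsilon$. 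In particular, since $2\varepsilon<\tfrac12$, Lemma~\ref{p01} applies to ${\mathcal{V}}_1\oplus{\mathcal{V}}_2$ and $\dim_{\mathcal{L}}({\mathcal{V}}_1\oplus{\mathcal{V}}_2)=\sigma_+(A_{{\mathcal{V}}_1\oplus{\mathcal{V}}_2})$ is well defined.

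For the additivity identity, write ${\mathcal{V}}_i={\mathcal{V}}_{i+}\oplus{\mathcal{V}}_{i-}$ for the decomposition of ${\mathcal{V}}_i$ into the positive and negative subspaces of $A_{{\mathcal{V}}_i}$ afforded by Lemma~\ref{p01}. I claim that $A$ is positive definite on ${\mathcal{V}}_{1+}\oplus{\mathcal{V}}_{2+}$ and negative definite on ${\mathcal{V}}_{1-}\oplus{\mathcal{V}}_{2-}$. Granting this, note that these two subspaces span ${\mathcal{V}}_1\oplus{\mathcal{V}}_2$ (they contain ${\mathcal{V}}_{1+}+{\mathcal{V}}_{1-}={\mathcal{V}}_1$ and ${\mathcal{V}}_{2+}+{\mathcal{V}}_{2-}={\mathcal{V}}_2$) and that their dimensions sum to $\dim{\mathcal{V}}_1+\dim{\mathcal{V}}_2=\dim({\mathcal{V}}_1\oplus{\mathcal{V}}_2)$, so the sum is direct. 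A standard consequence of Sylvester's law of inertia (if $V=W_+\oplus W_-$ with $A$ definite of the indicated sign on each summand, then any $A$-positive subspace of dimension exceeding $\dim W_+$ meets $W_-$ nontrivially, a contradiction) then gives $\sigma_+(A_{{\mathcal{V}}_1\oplus{\mathcal{V}}_2})=\dim({\mathcal{V}}_{1+}\oplus{\mathcal{V}}_{2+})=\dim_{\mathcal{L}}{\mathcal{V}}_1+\dim_{\mathcal{L}}{\mathcal{V}}_2$, which is \eqref{e05}.

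It remains to prove the claim, and the only point where $\varepsilon$-reducibility enters is the cross-term estimate. Take $u=u_1+u_2$ with $u_1\in{\mathcal{V}}_{1\pm}$ and $u_2\in{\mathcal{V}}_{2\pm}$ (same sign). Since $(u_1,u_2)=0$ and $u_1=P_{{\mathcal{V}}_1}u_1$, $P_{{\mathcal{V}}_1}u_2=0$, we compute
\[
  A(u_1,u_2)=(u_1,(2P_{\mathcal{L}}-1)u_2)=2(u_1,P_{\mathcal{L}} u_2)=2\bigl(u_1,P_{{\mathcal{V}}_1}P_{\mathcal{L}} u_2\bigr)=2\bigl(u_1,[P_{{\mathcal{V}}_1},P_{\mathcal{L}}]u_2\bigr),
\]
so $\abs{A(u_1,u_2)}\le 2\varepsilon\norm{u_1}\norm{u_2}$. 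Combining this with the bounds $\pm A[u_i]\ge(1-2\varepsilon)\norm{u_i}^2$ from Lemma~\ref{p01}, the orthogonality relation $\norm{u_1}^2+\norm{u_2}^2=\norm{u}^2$, and $2\norm{u_1}\norm{u_2}\le\norm{u}^2$, we obtain
\[
  \pm A[u]=\pm A[u_1]\pm A[u_2]+2\operatorname{Re}\bigl(\pm A(u_1,u_2)\bigr)\ge(1-2\varepsilon)\norm{u}^2-2\varepsilon\norm{u}^2=(1-4\varepsilon)\norm{u}^2>0
\]
for $u\neq0$ because $\varepsilon<\tfrac14$. This proves the claim and hence the lemma. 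The main (and only) obstacle is the cross-term estimate; once one spots that $P_{{\mathcal{V}}_1}P_{\mathcal{L}} u_2$ is literally $[P_{{\mathcal{V}}_1},P_{\mathcal{L}}]u_2$, everything else is routine linear algebra, and the threshold $\varepsilon<\tfrac14$ is exactly what the estimate $(1-4\varepsilon)>0$ demands.
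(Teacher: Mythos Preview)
Your proof is correct and follows essentially the same approach as the paper: both obtain the $2\varepsilon$-reducibility from $P_{{\mathcal{V}}_1\oplus{\mathcal{V}}_2}=P_{{\mathcal{V}}_1}+P_{{\mathcal{V}}_2}$, both show $A$ is positive definite on ${\mathcal{V}}_{1+}\oplus{\mathcal{V}}_{2+}$ via the same cross-term estimate $\abs{A(u_1,u_2)}\le 2\varepsilon\norm{u_1}\norm{u_2}$, and both deduce the dimension identity by combining this with the corresponding negative-definite statement. The only cosmetic differences are that you simplify the cross term using $(u_1,u_2)=0$ before inserting the projection (the paper inserts $Q_2$ directly), you bound the resulting quadratic form via $2\norm{u_1}\norm{u_2}\le\norm{u}^2$ rather than a discriminant computation, and you invoke Sylvester's law explicitly where the paper phrases the same counting argument through Lemma~\ref{p02}.
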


\begin{proof}
Let ${\mathcal{V}}={\mathcal{V}}_1\oplus {\mathcal{V}}_2$, $Q=P_{\mathcal{V}}$, and $Q_j=P_{{\mathcal{V}}_j}$, $j=1,2$. We have
$Q=Q_1+Q_2$, and so the first assertion is obvious. To prove the
second assertion, consider the subspace ${\mathcal{W}}={\mathcal{V}}_{1+}\oplus
{\mathcal{V}}_{2+}\subset {\mathcal{V}}$. We cannot claim that ${\mathcal{W}}={\mathcal{V}}_+$; however, we will
show that the restriction of the form $A_{\mathcal{V}}$ to this subspace (i.e.,
just the form $A_{\mathcal{W}}$) is positive definite. Indeed, let $u\in {\mathcal{W}}$.
Then $u=u_1+u_2$, $u_j\in {\mathcal{V}}_{j+}$, and we have
\begin{equation*}
    A[u]=A[u_1]+A[u_2]+2\operatorname{Re}(u_1,(2P-1)u_2),
\end{equation*}
where $P=P_{\mathcal{L}}$. Next,
\begin{align*}
    (u_1,(2P-1)u_2)&=(u_1,(2P-1)Q_2u_2)
\\
    &=(Q_2u_1,(2P-1)u_2)+2(u_1,[Q_2,P]u_2).
\end{align*}
The first term is zero, because $Q_2u_1=0$, and we obtain
\begin{equation*}
    \abs{(u_1,(2P-1)u_2)}\le 2\varepsilon\norm{u_1}\norm{u_2}.
\end{equation*}
Finally,
\begin{equation*}
    A[u]\ge (1-2\varepsilon)\norm{u_1}^2+(1-2\varepsilon)\norm{u_2}^2
    -4\varepsilon\norm{u_1}\norm{u_2}.
\end{equation*}
The discriminant
\begin{equation*}
    D(\varepsilon)=16\varepsilon^2-4(1-2\varepsilon)^2=16\varepsilon-4
\end{equation*}
of the quadratic form on the right-hand side is negative for
$\varepsilon<\frac14$, and hence the form $A_{\mathcal{W}}=A_{\mathcal{V}}\big|_{{\mathcal{W}}}$ itself is
positive definite. We conclude that
\begin{equation}\label{e03}
    \sigma_+(A_{\mathcal{V}})\ge\dim {\mathcal{W}}=\dim_{\mathcal{L}}{{\mathcal{V}}_1}+\dim_{\mathcal{L}}{{\mathcal{V}}_2}.
\end{equation}
The same reasoning with ${\mathcal{L}}$ and ${\mathcal{L}}_\perp$ interchanged shows that
\begin{equation}\label{e04}
    \sigma_-(A_{\mathcal{V}})\ge \dim_{{\mathcal{L}}^\perp}{{\mathcal{V}}_1}+\dim_{{\mathcal{L}}^\perp}{{\mathcal{V}}_2}.
\end{equation}
Assume that the inequality in~\eqref{e03} is strict. We
add~\eqref{e04} to~\eqref{e03} and use Lemma~\ref{p02} to obtain
\begin{align*}
    \dim {\mathcal{V}}&=\sigma_+(A_{\mathcal{V}})+\sigma_-(A_{\mathcal{V}})
\\
    &>\dim_{\mathcal{L}}{{\mathcal{V}}_1}+\dim_{\mathcal{L}}{{\mathcal{V}}_2}
    +\dim_{{\mathcal{L}}^\perp}{{\mathcal{V}}_1}+\dim_{{\mathcal{L}}^\perp}{{\mathcal{V}}_2}
\\
    &=\dim {\mathcal{V}}_1+\dim {\mathcal{V}}_2=\dim {\mathcal{V}},
\end{align*}
which is a contradiction. Thus, we have the equality
in~\eqref{e03}, relation~\eqref{e05} holds, and the proof of the
lemma is complete.
\end{proof}

\begin{lemma}\label{p04}
Let ${\mathcal{V}}_t$, $t\in[a,b]$, be a continuous family of
finite-dimensional $\varepsilon$-reducible subspaces, where $\varepsilon\le\frac12$
and the continuity is understood as the norm continuity of the
corresponding family of projections $Q(t)=P_{{\mathcal{V}}_t}$. Then
$\dim_{\mathcal{L}} {\mathcal{V}}_t$ is independent of~$t\in[a,b]$.
\end{lemma}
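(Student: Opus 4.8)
The plan is to reduce the statement to the fact that a norm-continuous family of finite-rank orthogonal projections on a Hilbert space has constant rank. Recall that $\dim_{\mathcal{L}}\mathcal{V}_t=\sigma_+(A_{\mathcal{V}_t})$, and put $Q(t)=P_{\mathcal{V}_t}$ and $C(t)=Q(t)(2P_{\mathcal{L}}-1)Q(t)$, a bounded self-adjoint operator on all of $\mathcal{H}$ with $\|C(t)\|\le 1$. Since $Q(t)$ restricts to the identity on $\mathcal{V}_t$ and to zero on $\mathcal{V}_t^{\perp}$, the operator $C(t)$ is the orthogonal direct sum of its restriction to $\mathcal{V}_t$ — which represents the form $A_{\mathcal{V}_t}$, exactly as in the proof of Lemma~\ref{p01} — and the zero operator on $\mathcal{V}_t^{\perp}$. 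Consequently the part of $\operatorname{Spec}(C(t))$ lying in $(0,\infty)$ coincides with the positive spectrum of $A_{\mathcal{V}_t}$, and the finite-rank spectral projection $E_+(t):=E(C(t),(0,\infty))$ satisfies $\operatorname{rank}E_+(t)=\sigma_+(A_{\mathcal{V}_t})=\dim_{\mathcal{L}}\mathcal{V}_t$. So it suffices to show that $t\mapsto E_+(t)$ is norm-continuous on $[a,b]$.

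First I would note that $C(\cdot)$ is norm-continuous: $\|2P_{\mathcal{L}}-1\|=\|Q(t)\|=1$ give $\|C(t)-C(s)\|\le 2\|Q(t)-Q(s)\|$, and $Q(\cdot)$ is norm-continuous by hypothesis. Next, Lemma~\ref{p01} (which applies because $\varepsilon<\tfrac12$, and which in particular guarantees that $\dim_{\mathcal{L}}\mathcal{V}_t$ is defined for every $t$) shows that every nonzero eigenvalue $\lambda$ of $C(t)$ satisfies $|\lambda|\ge\rho$, where $\rho:=1-2\varepsilon>0$ does not depend on~$t$; hence $\operatorname{Spec}(C(t))\subset[-1,-\rho]\cup\{0\}\cup[\rho,1]$ for every $t\in[a,b]$. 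Fix a continuous function $f\colon\mathbb{R}\to[0,1]$ with $f\equiv 0$ on $(-\infty,\rho/2]$ and $f\equiv 1$ on $[\rho,\infty)$. On the set $[-1,-\rho]\cup\{0\}\cup[\rho,1]$ this $f$ agrees with the indicator of $(0,\infty)$, so $f(C(t))=E_+(t)$ for all $t$; since the continuous functional calculus depends norm-continuously on the self-adjoint operator, $t\mapsto E_+(t)=f(C(t))$ is norm-continuous on $[a,b]$. Finally, a norm-continuous family of projections is locally of constant rank — $\|E_+(t)-E_+(t_0)\|<1$ forces $\operatorname{rank}E_+(t)=\operatorname{rank}E_+(t_0)$ — so $\operatorname{rank}E_+(t)$, and hence $\dim_{\mathcal{L}}\mathcal{V}_t$, is constant on the connected interval $[a,b]$.

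The only point requiring care is the existence of a \emph{uniform} spectral gap of half-width $\rho$: it is precisely the bound $\varepsilon<\tfrac12$ of Lemma~\ref{p01} that furnishes a single $\rho$, and hence a single cutoff function $f$, valid for all $t$ simultaneously; without such uniformity the projections $E_+(t)$ need not vary continuously. Everything else is routine. (One may instead obtain the norm continuity of $E_+(\cdot)$ from the Riesz-projection formula $E_+(t)=\frac{1}{2\pi i}\oint_{\Gamma}(z-C(t))^{-1}\,dz$ with $\Gamma$ a fixed contour enclosing $[\rho,1]$ and avoiding $(-\infty,0]$, since $(z-C(t))^{-1}$ is norm-continuous in $t$, uniformly for $z\in\Gamma$.)
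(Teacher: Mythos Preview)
Your proof is correct. The route differs from the paper's: the paper first trivializes the bundle of subspaces by invoking a norm-continuous unitary family $U(t)$ with $U(t)\mathcal{V}_t=\mathcal{V}$ fixed, then observes that the transported operators $U(t)Q(t)(2P_{\mathcal{L}}-1)U(t)^{-1}$ on the fixed finite-dimensional $\mathcal{V}$ are continuous and nonsingular, so their positive index of inertia is constant. You instead extend the form operator to $C(t)=Q(t)(2P_{\mathcal{L}}-1)Q(t)$ on all of $\mathcal{H}$, use the uniform spectral gap $|\lambda|\ge 1-2\varepsilon$ from Lemma~\ref{p01} to pick out the positive spectral projection $E_+(t)$ via a fixed continuous cutoff (or a fixed Riesz contour), and conclude by constancy of rank for norm-continuous projections. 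Your approach avoids quoting the existence of the trivializing unitary $U(t)$ and makes explicit use of the uniformity of the gap; the paper's approach is slightly shorter because on a fixed finite-dimensional space ``nonsingular and continuous $\Rightarrow$ constant signature'' needs no functional-calculus argument. Both are standard.

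One cosmetic point: you write ``$\varepsilon<\tfrac12$'' while the lemma as stated has ``$\varepsilon\le\tfrac12$''. This is harmless, since $\dim_{\mathcal{L}}$ is only defined (Definition~\ref{d02}) under the hypothesis of Lemma~\ref{p01}, which already requires $\varepsilon<\tfrac12$; the paper's own proof implicitly uses the strict inequality in the same way.
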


\begin{proof}
It is well known that there exists a unitary $U(t)$ continuously
depending on~$t$ such that the space ${\mathcal{V}}=U(t){\mathcal{V}}_t$ is independent
of~$t$. The operator
\begin{equation*}
  C(t)=U(t)Q(t)(2P-1)U^{-1}(t)\colon {\mathcal{V}}\longrightarrow {\mathcal{V}},
\end{equation*}
which determines the form $A_{{\mathcal{V}}_t}$ transferred by $U(t)$ to the
fixed subspace~${\mathcal{V}}$, continuously depends on~$t$ and is nonsingular
for all~$t$. Hence $\sigma_+(A_{{\mathcal{V}}_t})=\operatorname{const}$, as desired. The proof
of the lemma is complete.
\end{proof}

\subsection{Proof of Theorem~\ref{th1}}\label{ss42}

(a) We need to prove that the right-hand side of~\eqref{e06} is independent of the choice of the partition of the interval $[0,1]$ and the numbers~$\gamma_j$. To compare two such choices, it suffices to consider the case in which both partitions are the same (just take a new partition containing the points of both). Further, we can change the numbers $\gamma_j$ one by one, so it suffices to see what happens if we change just one of them, i.e., replace $\gamma_j$ by some $\widetilde\gamma_j$ on the interval $[t_{j-1},t_j]$. The points $\gamma_j$ and $\widetilde\gamma_j$ do not lie in the spectrum of $B_t$ for any $t\in[t_{j-1},t_j]$.  The projection onto the linear span~${\mathcal{V}}(B_t,\gamma_j,\widetilde\gamma_j)$ of eigenvectors of~$B_t$ corresponding to eigenvalues lying between~$\gamma_j$ and~$\widetilde\gamma_j$ can be expressed as the contour integral of the resolvent of~$B_t$ over a loop crossing the real line at the points~$\gamma_j$ and~$\widetilde\gamma_j$ (see Fig.~\ref{fig08}) and hence continuously depends on $t\in[t_{j-1},t_j]$.
\begin{figure}[ht]
\centering
\includegraphics{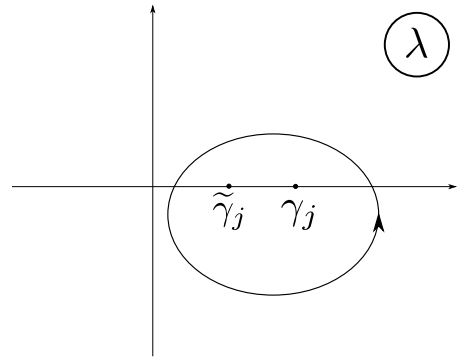}
\caption{Integration contour for the projection onto~${\mathcal{V}}(B_t,\gamma_j,\widetilde\gamma_j)$}
\label{fig08}
\end{figure}
In other words, ${\mathcal{V}}(t)={\mathcal{V}}(B_t,\gamma_j,\widetilde\gamma_j)$ depends on $t$ continuously on that interval, and $\dim_{\mathcal{L}}{\mathcal{V}}(t_{j-1})=\dim_{\mathcal{L}}{\mathcal{V}}(t_j)$ by Lemma~\ref{p04}. Now let us see what changes occur in the sum~\eqref{e06} when replacing $\gamma_j$ by $\widetilde\gamma_j$. Only the $(j-1)$st and $j$th terms are affected; the number $\dim_{\mathcal{L}} {\mathcal{V}}(t_{j-1})=\dim_{\mathcal{L}}{\mathcal{V}}(t_j)$ is added to one of these terms and subtracted from the other by Lemma~\ref{p03}, and so the sum remains unchanged.

The ${\mathcal{L}}^\perp$-tameness is a straightforward consequence of
Lemma~\ref{p02}, and formula~\eqref{compl} follows from the fact that the sum of positive and negative indices of inertia of a nondegenerate quadratic form is the total dimension of the space where the form is considered. The proof of~(a) is complete.

\medskip

(b) It suffices to prove that
\begin{multline*}
    \operatorname{sf}_{\mathcal{L}}\{B(t,0)\}
    +\operatorname{sf}_{\mathcal{L}}\{B(1,t)\}
\\
-\operatorname{sf}_{\mathcal{L}}\{B(t,1)\}
    -\operatorname{sf}_{\mathcal{L}}\{B(0,t)\}=0.
\end{multline*}
The left-hand side of this equation is just the partial spectral
flow along ${\mathcal{L}}$ of the family obtained by the restriction of
$B(t,\tau)$ to the boundary of the unit square (with the
counterclockwise sense). The closed contour (the boundary) can be
contracted into a point within the unit square, without changing
the partial spectral flow. (Indeed, for sufficiently small changes of the
contour the partition of the interval and the $\gamma_j$ can remain
unchanged, so the partial spectral flow remains constant.) The
partial spectral flow of the constant family is obviously zero, and
the theorem follows.

The proof of Theorem~\ref{th1} is complete. \qed

\begin{remark}
Condition~\eqref{weak-iso} is a generalization of the isospectrality
condition, which looks as follows for the case of partial spectral
flow:

\textit{For any
$\gamma,\widetilde\gamma\in(-\delta,\delta)$, where $\delta>0$ is the same as in Definition~\textup{\ref{d-Ltame}},
one has
\begin{equation}\label{isol}
    \dim_{\mathcal{L}} {\mathcal{V}}(B(0,\tau),\gamma,\widetilde\gamma)=\dim_{\mathcal{L}} {\mathcal{V}}(B(1,\tau),\gamma,\widetilde\gamma)
\end{equation}
for all $\tau\in[0,1]$.}

In the case of the usual spectral flow (${\mathcal{L}}={\mathcal{H}}$), this condition becomes the common isospectrality condition: the spectra of $B(0,\tau)$ and $B(1,\tau)$ in a neighborhood of $\lambda=0$ are the same for each~$\tau$. Although condition~\eqref{weak-iso} is much weaker than the
isospectrality condition~\eqref{isol}, it is sufficient for the
homotopy invariance to hold.
\end{remark}

\section{Conclusions}

To conclude, let us look at our results from a more general point of view. Transfer of concepts between condensed matter physics and the ``fundamental physics'' such as high energy physics, cosmology, etc. is an important source of innovations in both of the fields. It is probably enough to mention such concepts as spontaneous broken symmetry and renormalization group which revolutionized the fields. To be closer to our specific subject one can just refer to the role of graphene as ``CERN on the desk'', with long-waiting physical realizations of Klein paradox and relativistic atomic collapse \cite{Kats12}. There is however a fundamental difference: while in high-energy physics and quantum field theory the space-time is assumed to be continuous (despite the use of lattices is an extremely useful technical tool \cite{Creu}), in condensed matter physics the discreteness of crystal lattices is the crucial fact. The difference is especially important when transferring {\it topological} concepts to condensed matter physics: from the point of view of topology, continuum and a discrete lattice are {\it dramatically} different. In this paper we have demonstrated, using a specific simple example, that in some cases this transfer can be rigorously justified. Namely, one can make a conclusion that under certain circumstances adiabatically growing magnetic fluxes will induce electron-hole pair creation in graphene, because of nonvanishing spectral flow of Dirac operator \cite{KatNa1}. The spectral flow of the tight-binding Hamiltonian at honeycomb lattice {\it is} obviously zero but nevertheless the physical conclusion formulated above is still valid and can be justified via the new concept of {\it partial} spectral flow. Despite globally the (unbounded and differential) Dirac operator and (bounded and finite-matrix) Hamiltonian on honeycomb lattices are completely different their topological properties are connected in some nontrivial way. We believe that this example can be interesting for a much more general issue on the connections between lattice and continuous models in physics.

\section*{Acknowledgements}

The work of MIK was supported by the JTC-FLAGERA Project GRANSPORT. The work of VN was supported by the Ministry of Science and Higher Education of the Russian Federation within the framework of the Russian State Assignment under contract No.~AAAA-A20-120011690131-7.

\appendix
\section{Some technical computations}\label{AppA}

\def\appendixname{\!}
\def\sectcounterend{}

\subsection{Proof of Lemma~\ref{l202}}\label{ssA1}

Consider the mapping $\omega\colon\mathbb{R}^2\to\mathbb{R}^2$, $(x_1,x_2)\mapsto(-x_1,x_2)$, and the rectangle
\begin{equation*}
  \widetilde X=X\cup\omega(X)=[L,L]\times[0,l],
\end{equation*}
which we identify with the torus obtained by pasting together the endpoints of each of the two intervals. The lattice $\widetilde X_B=X_B\cup\omega(X_A)$ is the natural extension of the lattice~$X_B$ from~$X$ to~$\widetilde X$, and the mapping $V\colon{\mathcal{H}}_a\to\ell^2(\widetilde X_B)$ given by
\begin{equation*}
  [Vf](x)=\begin{cases}
            f(x),&x_1>0,\\
            f(\omega(x))\equiv f(-x_1,x_2),&x_1<0,
          \end{cases}
          \quad x\in\widetilde X_B,
\end{equation*}
is a unitary isomorphism. Note that 
\begin{equation}\label{Vphimn}
  [V\varphi_{mn}](x)=e^{i\tfrac{\pi m}L x_1
    +i\tfrac{2\pi n}l x_2},\qquad x\in\widetilde X_B,
\end{equation}
and so it suffices to prove that the functions~\eqref{Vphimn}, where $(m,n)\in G_0$, form an orthonormal basis in $\ell^2(\widetilde X_B)$. To show this, we reduce~$G_0$ to a more convenient indexing set. Consider the vectors
\begin{equation}\label{e12}
  e_1=(2M,N),\qquad e_2=(2M,-N). 
\end{equation}
One can show by a straightforward computation that the functions $\varphi_{mn}(x)$, $x\in X_a$, obey the transformation rule
\begin{equation}\label{trans-rule}
\begin{aligned}
  \varphi_{\widetilde m\widetilde n}(x)&=e^{-\tfrac{2\pi(j+k)i}3}\varphi_{mn}(x)
\\ 
\text{if}\quad
  (\widetilde m,\widetilde n)&=(m,n)+je_1+ke_2,\quad j,k\in\mathbb{Z},
\end{aligned}
\end{equation}
and so do the functions~\eqref{Vphimn}; hence we can transform~$G_0$ by shifting each element $(m,n)\in G_0$ by some vector of the integer lattice generated by~$e_1$ and~$e_2$. It is an elementary but tiresome exercise to show that such shifts can be used to reduce $G_0$ to the set
$G_1=\{(m,n)\colon -M\le m<M ,\;-N\le n<N\}$. Since the lattice $\widetilde X_B$ on the torus~$\widetilde X$ is the (skew) product of two one-dimensional lattices on circles with $2M$ and $2N$ points, respectively, it readily follows that the functions~\eqref{Vphimn} with $(m,n)\in G_1$ (and hence with $(m,n)\in G_0$) indeed form an orthonormal basis in~$\ell(\widetilde X_B)$. \qed

\subsection{Action of $\widehat H_{0t}$ on basis vectors}\label{ssA2}

We have 
\begin{equation*}
  \widehat H_{0t}=H(\widehat p-{\mathrm{A}}_0(t))
  =\frac{2}{3a}
  \begin{pmatrix}
    0 & T(\widehat p-{\mathrm{A}}_0(t)) \\
    T^*(\widehat p-{\mathrm{A}}_0(t)) & 0
  \end{pmatrix}.
\end{equation*}
The basis functions $\varphi_{mn}(x)$ given by~\eqref{basis1} agree with the boundary conditions~\eqref{bc} in the sense that the values of components of these functions prescribed by the boundary conditions at the fictitious nodes outside~$X$ are given by the same exponential expressions as the components themselves. As a consequence, the application of a function of $\widehat p$ to these components amounts to the replacement of $\widehat p$ by the corresponding wave number. In particular, we have
\begin{align*}
 \widehat H_{0t}\varphi_{mn}&=\frac{2}{3a}
 \begin{pmatrix}
   T\biggl(-\dfrac{\pi m}{L},\dfrac{2\pi(n-q(t)}{l}\biggr)
   e^{-i\tfrac{\pi m}L x_1
    +i\tfrac{2\pi n}l x_2}\\[12pt]
   T^*\biggl(\dfrac{\pi m}{L},\dfrac{2\pi(n-q(t)}{l}\biggr)
   e^{i\tfrac{\pi m}L x_1
    +i\tfrac{2\pi n}l x_2}  
 \end{pmatrix}
\\
 &=\frac{2}{3a}
 T\biggl(-\dfrac{\pi m}{L},\dfrac{2\pi(n-q(t)}{l}\biggr)
 \varphi_{-m,n}
\\
 &=\frac{2}{3a}e^{\tfrac{4\pi i}3}
 T\biggl(-\dfrac{\pi m}{L},\dfrac{2\pi(n-q(t)}{l}\biggr)
 \varphi_{2\overline{m}-m,n}
 \\
 &\equiv \mu(m,n,t)\varphi_{2\overline{m}-m,n},
\end{align*}
because $T^*(p_1,p_2)=T(-p_1,p_2)$ and in view of the transformation formula~\eqref{trans-rule}. (Note that $\overline{m}=2M$ and so $(2\overline{m}-m,n)=(-m,n)+e_1+e_2$.) A straightforward computation using definition~\eqref{tbh} of~$T(p)$ shows that
\begin{equation}\label{mu2}
  \mu(m,n,t)=\frac{2}{3a}e^{-\tfrac13i\alpha}
  (e^{i\alpha}-2\cos\beta),
\end{equation}
where
\begin{equation*}
  \alpha=\frac{\pi(m-\overline{m})}{2M},\qquad
  \beta=\frac{\pi(n-q(t))}{N}.
\end{equation*} 
Further,
\begin{equation*}
  \widehat H_{0t}\varphi_{2\overline{m}-m,n}=\mu(2\overline{m}-m,n,t)\varphi_{mn},
\end{equation*}
and we note that
\begin{equation*}
  \frac{\pi((2\overline{m}-m)-\overline{m})}{2M}=\frac{\pi(\overline{m}-m)}{2M}=-\alpha
\end{equation*}
and hence $\mu(2\overline{m}-m,n,t)=\mu^*(m,n,t)$.

\subsection{Proof of Lemma~\ref{EVP-est}}\label{ssA3}

First, consider a subspace ${\mathcal{V}}_{mn}$ with $m>\overline{m}$, $(m,n)\in G$. In this case,
\begin{equation*}
  \frac\pi2\ge\alpha=\frac{\pi(m-\overline{m})}{2M}\ge
  \frac{\pi}{2M}=\frac{3\pi a}{2L}.
\end{equation*}
Consequently, 
\begin{equation*}
  \abs{\mu(m,n,t)}\ge\frac{2}{3a}\operatorname{Im} e^{i\alpha}
  \ge\frac{2}{3a}\sin\frac{3\pi a}{2L}
  \xrightarrow{a\to0}\frac\pi L,
\end{equation*}
and the right-hand side is greater than $\pi/(2L)$ for small~$a$.

Next, consider the subspace~${\mathcal{W}}_n$. The eigenvalue in question has the form
\begin{equation}\label{vaz}
\begin{aligned}
  \mu(\overline{m},n,t)&=\frac{2}{3a}(1-2\cos\beta)
  \\
  &=\frac{2}{3a}
  \biggl[1-2\cos\biggl[\pm\frac{\pi}{3}+\frac\pi N(n\pm\overline{n}-q(t))\biggr]\biggr].
\end{aligned}
\end{equation}
(Recall that $N=3\overline{n}$.) If $\abs{n-q(t)\pm\overline{n}}\ge1$, then
\begin{equation*}
  \frac{2\pi}{3}+\frac{\pi\abs{q(t)}a\sqrt3}l\ge
  \left\vert\beta\pm\frac\pi3\right\vert\ge\frac{\pi a\sqrt3}l,
\end{equation*}
and hence
\begin{equation*}
  \abs{\mu(\overline{m},n,t)}\ge \frac{2}{3a}
  \frac{\pi a\sqrt3}l\sin\frac\pi3
  =\frac\pi l 
\end{equation*}
for sufficiently small~$a$. Now we see that it suffices to set
\begin{equation*}
  \delta=\min\biggl\{\frac\pi{2L},\frac\pi l\biggr\}, \qquad
  \overline{q}=\max_{t\in[0,1]}\abs{q(t)}
\end{equation*}
and take a sufficiently small~$a_0$. The proof of the lemma is complete. \qed

\subsection{Orthogonal basis in~${\mathcal{H}}_0$}\label{ssA5}

By analogy with Sec.~\ref{ssA1},
the mapping $V_0\colon{\mathcal{H}}_0\to L^2(\widetilde X)$ given by
\begin{equation*}
  [V_0f](x)=\begin{cases}
            f(x),&x_1\ge 0,\\
            f(\omega(x))\equiv f(-x_1,x_2),&x_1<0,
          \end{cases}
          \quad x\in\widetilde X,
\end{equation*}
is a unitary isomorphism. Further, the exponentials
\begin{equation}\label{emn}
 [V_0u_{mn}](x)=e^{i\tfrac{\pi m}L x_1
    +i\tfrac{2\pi n}l x_2}=:e_{mn}(x),\qquad x\in\widetilde X,
\end{equation}
with $(m,n)\in\mathbb{Z}^2$ form an orthonormal basis in $L^2(\widetilde X)$. Hence the original functions $u_{mn}(x)$ form an orthonormal basis in ${\mathcal{H}}_0$, as desired. Further, the boundary conditions~\eqref{bmbc} require that $iu_A(x_1,x_2)=u_B(x_1,x_2)$ for $x_1=0$ and $x_1=L$. For the functions $u_{mn}(x)$, this amounts to the requirement that
\begin{equation*}
  e^{i\tfrac{\pi m}L x_1}=e^{-i\tfrac{\pi m}L x_1} 
\end{equation*}
for $x_1=0$ and $x_1=L$, which is obviously true.

\subsection{Proof of Lemma~\ref{SOS}}\label{ssA4}

The proof is based on some properties of the function $e^{iF(x,t)}$ occurring in the definition of the operator~$U_t$ (Sec.~\ref{ss34},~\textbf{\emph{B}}). Consider the expansion of the function $e^{iF(x,t)}$ restricted to the lattice~$X_a$ in the basis functions $\varphi_{mn}$:
\begin{equation*}
  e^{iF(x,t)}=\sum_{(m,n)\in G_0}
  b(m,n,t)\varphi_{mn}(x), x\in X_a.
\end{equation*}
We need estimates for the coefficients $b(m,n,t)$. To state these estimates, we introduce the function
\begin{equation*}
  \rho(m,n)=\min_{j,k\in\mathbb{Z}}
  [(m+2M(j+k))^2+(n+N(j-k))^2]^{1/2}.
\end{equation*}
This function is none other than the distance from $(m,n)$ to the nearest point of the integer lattice generated by the vectors $e_1=(2M,N)$ and $e_2=(2M,-N)$ (see~\eqref{e12}). 
\begin{proposition}\label{p05}
There exists a constant $C>0$ independent of~$a$ such that
\begin{equation}\label{crucial}
  \abs{b(m,n,t)}\le\frac{C}{1+\rho^2(m,n)}.
\end{equation}
\end{proposition}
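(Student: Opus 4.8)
The plan is to estimate the Fourier coefficients $b(m,n,t)$ by reducing the sum over the lattice $X_a$ to an integral over the torus $\widetilde X$ and then exploiting the smoothness of $F(x,t)$ via repeated summation by parts. First I would use the unitary isomorphism $V\colon{\mathcal{H}}_a\to\ell^2(\widetilde X_B)$ from Sec.~\ref{ssA1} together with~\eqref{Vphimn} to rewrite
\begin{equation*}
  b(m,n,t)=\bigl(\varphi_{mn},e^{iF(\,\boldsymbol\cdot\,,t)}\bigr)
  =\frac1{4MN}\sum_{x\in\widetilde X_B}
   e^{-i\tfrac{\pi m}L x_1-i\tfrac{2\pi n}l x_2}\,\widetilde F(x,t),
\end{equation*}
where $\widetilde F(x,t)=F(x,t)$ for $x_1>0$ and $\widetilde F(x,t)=F(-x_1,x_2,t)$ for $x_1<0$; note that $e^{i\widetilde F(\,\boldsymbol\cdot\,,t)}$ is Lipschitz on $\widetilde X$, and in fact piecewise smooth with bounded derivatives of all orders up to the reflection locus $x_1=0,L$. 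Since $\widetilde X_B$ is the skew product of two equally spaced circular lattices with $2M$ and $2N$ points, this is a genuine discrete Fourier coefficient on $(\mathbb{Z}/2M)\times(\mathbb{Z}/2N)$ (after accounting for the skew identification via the transformation rule~\eqref{trans-rule}, which is exactly what the function $\rho(m,n)$ encodes).

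The key step is then a discrete integration-by-parts (Abel summation) argument: applying the forward difference operators $\Delta_1,\Delta_2$ in the two lattice directions to the exponential produces factors of the form $e^{\pm i\pi m/M}-1$ and $e^{\pm 2\pi i n/N}-1$, whose moduli are bounded below by $c\,|m|/M$ and $c\,|n|/N$ respectively for $m,n$ in the fundamental range, while the differences $\Delta_j\widetilde F$ are $O(a)=O(1/M)=O(1/N)$ uniformly in $t$ by the smoothness of $F$. Summing by parts twice (once in each variable, or twice in a single variable) therefore gains a factor $O\!\bigl(\tfrac{1}{1+m^2}\bigr)$ and $O\!\bigl(\tfrac{1}{1+n^2}\bigr)$ — the powers of $a$ from the differences of $\widetilde F$ cancelling against the $M,N$ in the denominators of the difference quotients of the exponential — which yields
\begin{equation*}
  \abs{b(m,n,t)}\le\frac{C}{(1+m^2)(1+n^2)}\le\frac{C'}{1+m^2+n^2}
\end{equation*}
for $(m,n)$ in a fundamental domain, and the general estimate~\eqref{crucial} follows by periodicity (i.e.\ by replacing $m^2+n^2$ with $\rho^2(m,n)$, using~\eqref{trans-rule}). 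The constant is uniform in $t\in[0,1]$ and in $a$ because all the estimates only involve $\sup$-norms of finitely many derivatives of $F(x,t)$ on the compact set $X\times[0,1]$.

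The main obstacle I anticipate is the loss of global smoothness of $\widetilde F$ across the reflection locus $x_1=0$ and $x_1=L$: the even extension $\widetilde F$ is continuous but its odd-order $x_1$-derivatives generally jump, so naive summation by parts in the $x_1$-variable stalls at second order (which, fortunately, is already enough for the $1+\rho^2$ bound, but no more). One must therefore be slightly careful: either argue that two difference operations suffice and do not probe the singularity (the boundary terms in the Abel summation land exactly on $x_1=0,L$, where $\Delta_1\widetilde F$ is still $O(a)$ since $\widetilde F$ itself is continuous and Lipschitz there), or handle the $x_1$-sum and the $x_2$-sum separately — the $x_2$-direction is genuinely smooth and periodic, so one gains arbitrarily many powers of $n$ there, while in $x_1$ one settles for the two powers of $m$ that continuity and one-sided smoothness permit. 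Either way, keeping track of the boundary contributions and verifying the uniformity of all constants in $a$ (equivalently in $M,N\to\infty$ with $M/N$ fixed) is the delicate bookkeeping, but it is routine once the summation-by-parts structure is set up.
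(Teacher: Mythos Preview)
Your approach is correct in outline but takes a genuinely different route from the paper. The paper does \emph{not} work directly with the discrete coefficients: instead it extends $e^{iF(x,t)}$ evenly to the torus $\widetilde X$, computes the \emph{continuous} Fourier coefficients
\[
  c(m,n,t)=\frac{1}{2lL}\iint_{\widetilde X}e^{-i\tfrac{\pi m}{L}x_1-i\tfrac{2\pi n}{l}x_2}e^{iF(x,t)}\,dx,
\]
estimates these by ordinary integration by parts (twice in each variable, the $x_1$-direction being limited to two by the jump in $\partial_{x_1}$), and then recovers $b(m,n,t)$ from the $c(m,n,t)$ via the Poisson/aliasing identity
\[
  b(m,n,t)=\sum_{j,k}e^{\tfrac{2\pi i(j+k)}{3}}c\bigl(m+2M(j+k),\,n+N(j-k)\bigr),
\]
after which the decay of $c$ is summed over the alias lattice. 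Your proposal instead stays on the discrete side throughout and uses Abel summation directly on $\widetilde X_B$, which eliminates the aliasing step entirely. Each method has a price: the paper's route needs the somewhat tedious four-way splitting of the alias sum, while yours has to cope with the fact that $\widetilde X_B$ is a \emph{triangular} Bravais lattice---the natural difference operators are along $a_1,a_2$ and produce phase factors $e^{i(\pi m/(2M)\pm\pi n/N)}-1$ mixing $m$ and $n$, so isolating clean powers of $m$ and of $n$ separately takes a little more care than your sketch suggests.

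Two small corrections. First, a typo: the summand in your displayed formula for $b(m,n,t)$ should be $e^{i\widetilde F(x,t)}$, not $\widetilde F(x,t)$. Second, your description of the obstacle is slightly off: on a periodic lattice there are no boundary terms in Abel summation; the actual issue is that the \emph{second} difference $\Delta_1^2(e^{i\widetilde F})$ is only $O(a)$ (not $O(a^2)$) at the $O(N)$ lattice points straddling $x_1=0$ or $x_1=L$. This is harmless, because in the $\ell^1$ average over $4MN$ points these bad contributions are $O(Na)/(MN)=O(a^2)$, so two discrete differences in $x_1$ still yield the factor $(1+m^2)^{-1}$---but that is the argument you need to spell out, not the one about $\Delta_1\widetilde F$ being $O(a)$. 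Also note that ``once in each variable'' alone gives only $C/((1+|m|)(1+|n|))$, which does not dominate $C/(1+m^2+n^2)$; you really do need the ``twice in a single variable'' option (in whichever direction dominates) to reach~\eqref{crucial}.
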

\begin{proof}
Let us continue the function $e^{iF(x,t)}$ from $X$ to~$\widetilde X$ as an even function of~$x_1$. We use the same notation for the continuation as for the function itself. Thus,
\begin{equation*}
  e^{iF(x_1,x_2,t)}=e^{iF(-x_1,x_2,t)},\qquad x\in\widetilde X.
\end{equation*}
We view $\widetilde X$ as a torus. Let
\begin{equation}\label{cmnt}
  c(m,n,t)=\frac{1}{2lL}\iint_{\widetilde X} e^{-i\tfrac{\pi m}L x_1
    -i\tfrac{2\pi n}l x_2}e^{iF(x_1,x_2,t)}\,dx_1\,dx_2
\end{equation}
be the Fourier coefficients of the function $e^{iF(x,t)}$ in the system of exponentials $\{e_{mn}(x)\}$, $(m,n)\in\mathbb{Z}^2$.
These coefficients satisfy the estimates 
\begin{equation*}
  \abs{c(m,n,t)}\le\frac{C_1}{(1+m^2)(1+n^2)}, \qquad (m,n)\in\mathbb{Z}^2,
\end{equation*}
which can be derived in a standard way by integration by parts in~\eqref{cmnt} with respect to $x_1$ for $m\ne0$ and with respect to $x_2$ for $n\ne0$. The function $e^{iF(x_1,x_2,t)}$ is continuous, but its first derivative with respect to~$x_1$ may have jump discontinuities at $x_1=0$ and $x_1=L$. Hence we can integrate at most twice by parts with respect to~$x_1$: the second time we get the integrated term, and the factor $(1+m^2)^{-1}$ cannot be improved further. We can integrate as many times as desired with respect to~$x_2$, but we just do not need a better estimate than $(1+n^2)^{-1}$. In view of the construction in Sec.~\ref{ssA1}, the coefficients $b(m,n,t)$ coincide with the coefficients in the expansion of the restriction of $e^{iF(x_1,x_2,t)}$ to~$\widetilde X_B$ in the functions~\eqref{Vphimn}, $(m,n)\in G_0$. In view of the transformation rule~\eqref{trans-rule} for the functions~\eqref{Vphimn}, we have
\begin{equation*}
  b(m,n,t)=\sum_{j,k=-\infty}^{\infty} e^{i\tfrac{2(j+k)\pi}3} c\bigl(m+2M(j+k),n+N(j-k)\bigr),
\end{equation*}
and so
\begin{multline}\label{sum}
  \abs{b(m,n,t)}\\
  \le\sum_{j=-\infty}^{\infty}
  \sum_{\substack{k=-\infty\\k=j\bmod2}}^{^\infty}
  \frac{C_1}{(1+(m+2Mj)^2)(1+(n+Nk)^2)}.
\end{multline}
Note that $M\le m\le 3M$ and $-N\le n<N$; hence we have
\begin{align*}
  \abs{m+2Mj}&\ge \frac12M\abs{j}\quad\text{for $j\notin\{-1,0\}$,}\\
  \abs{n+Nk}&\ge \frac12N\abs{k}\quad\text{for $k\notin\{-1,0,1\}$.}
\end{align*}
Now we split the sum~\eqref{sum} into four sums $\Sigma_1+\Sigma_2+\Sigma_3+\Sigma_4$, where the summation is

over the set $\Delta_1$: $j\in\{-1,0\}$ and $k\in\{-1,0,1\}$ for $\Sigma_1$;

over the set $\Delta_2$: $j\in\{-1,0\}$ and $k\notin\{-1,0,1\}$ for $\Sigma_2$;

over the set $\Delta_3$: $j\notin\{-1,0\}$ and $k\in\{-1,0,1\}$ for $\Sigma_3$;

over  the set $\Delta_4$: $j\notin\{-1,0\}$ and $k\notin\{-1,0,1\}$ for $\Sigma_4$.

\noindent Of course, we also have in mind the condition $k=j\bmod2$.

The sum~$\Sigma_1$ contains three terms, 
\begin{equation*}
  \Delta_1=\{(-1,-1),(0,0),(-1,1)\}.
\end{equation*}
Since
\begin{align*}
  (1+(m&+2Mj)^2)(1+(n+Nk)^2)
\\  
  &= 1+(m+2Mj)^2 +(n+Nk)^2
\\
  &\qquad\qquad{}+(m+2Mj)^2(n+Nk)^2
\\
  &\ge 1+(m+2Mj)^2 +(n+Nk)^2 \ge 1+\rho^2(m,n) 
\end{align*}
for any $j=k\bmod2$, we have
\begin{equation*}
  \Sigma_1\le\frac{3C_1}{1+\rho^2(m,n)}
\end{equation*}
Further,
\begin{align*}
  \Sigma_2 &\le 4C_1\sum_{k\in\mathbb{Z}\setminus\{0\}} \frac{1}{N^2k^2} = \frac{C_2}{N^2}  \\
  \Sigma_3 &\le 6C_1\sum_{j\in\mathbb{Z}\setminus\{0\}} \frac{1}{(2M)^2j^2} = \frac{C_3}{M^2}\\
  \Sigma_4 &\le C_1\sum_{j,k\in\mathbb{Z}\setminus\{0\}}
  \frac{1}{N^2k^2}\frac{1}{M^2j^2}=\frac{C_4}{M^2N^2}.
\end{align*}
Since the ratio $M/N$ is equal to $L/(l\sqrt3)$ and does not vary as $a\to0$ and $M,N\to\infty$, we readily see that there exists a constant $C_5$ such that 
\begin{equation*}
  1+\rho^2(m,n)\le C_5M=C_5 L/(l\sqrt3)N
\end{equation*}
for any $(m,n)$. Hence we arrive at~\eqref{crucial}. 
The proof of the proposition is complete.\qed  
\end{proof}

Now we can prove the lemma. One has
\begin{align}\label{1}
  [P_{\mathcal{L}},\widetilde P_k]&=P_{\mathcal{L}} U_t P_k U_t^{-1}- U_t P_k U_t^{-1}P_{\mathcal{L}}
\\\label{2}
  &=U_t P_k U_t^{-1}(1-P_{\mathcal{L}})-(1-P_{\mathcal{L}}) U_t P_k U_t^{-1},
\end{align}
and hence
\begin{align}\label{1a}
  \norm{[P_{\mathcal{L}},\smash{\widetilde P_k}]}&\le
  2\norm{P_{\mathcal{L}} U_t P_k },
\\\label{2a}
  \norm{[P_{\mathcal{L}},\smash{\widetilde P_k}]}&\le
  2\norm{(1-P_{\mathcal{L}})U_t P_k }.
\end{align}
If $Q$ is a projection, then 
\begin{equation*}
  Q U_t P_k u=(\varphi_{\overline{m} k}, u)Q (e^{iF(x,t)}\varphi_{\overline{m} k}(x)),
\end{equation*}
and hence
\begin{equation*}
  \norm{Q U_t P_k}=\norm{Q (e^{iF(x,t)}\varphi_{\overline{m} k}(x))}.
\end{equation*}

We will use the estimate~\eqref{1a} if $\abs{k+\overline{n}}\le\overline{q}$ and the estimate~\eqref{2a} if $\abs{k-\overline{n}}\le\overline{q}$. Consider the latter case. We have
\begin{equation*}
  (1-P_{\mathcal{L}})(e^{iF(x,t)}\varphi_{\overline{m} k}(x))
  =\sideset{}{'}\sum_{(m,n)\in G_0}
  b(m,n,t)\varphi_{m+\overline{m},n+k}(x),
\end{equation*}
where the prime indicates that the sum is over $(m,n)\in G_0$ satisfying $\rho(m,n+k-\overline{n})>d$. (Indeed, $1-P_{\mathcal{L}}$ annihilates any basis function~$\varphi_{js}$ with $\rho(j-\overline{m},s-\overline{n})\le d$.) If $\rho(m,n+k-\overline{n})>d$, then, by the triangle inequality for the metric generated by~$\rho$,
\begin{equation*}
  \rho(m,n)\ge \rho(m,n+k-\overline{n})-\rho(0,k-\overline{n})>d-\overline{q},
\end{equation*}
and we have
\begin{align*}
  &\norm{(1-P_{\mathcal{L}})(e^{iF(x,t)}\varphi_{\overline{m} k}(x))}^2
\\ &\le
  \sum_{\substack{(m,n)\in G_0\\ \rho(m,n)>d-\overline{q}}}
  \abs{b(m,n,t)}^2
\le C\sum_{\substack{(m,n)\in G_0\\\rho(m,n)> d-\overline{q}}}\frac{1}{(1+\rho^2(m,n))^2}
\\
  &\le C\sum_{\substack{(m,n)\in \mathbb{Z}^2\\\overline{m}^2+n^2> (d-\overline{q})^2}}\frac{1}{(1+m^2+n^2)^2}.
\end{align*}
(The last transition can be explained as follows: we shift each points of $G_0$ by some integer linear combination of $e_1$ and $e_2$ so as to ensure that $\rho^2(m,n)=m^2+n^2$ and then extend the summation to all $(m,n)\in\mathbb{Z}^2$ with $m^2+n^2>(d-\overline{q})^2$ by adding infinitely many positive terms to the sum.) Since the series $\sum(1+m^2+n^2)^{-2}$ converges, we can find $d$ such that the right-hand side of the last inequality is less than $4^{-1}(4\overline{q}+2)^{-1}$.

The proof for the case of $\abs{k+\overline{n}}\le\overline{q}$ goes along the same lines. Here we use formula~\eqref{1a} instead of~\eqref{2a}, and the role of $d-\overline{q}$ is now played by $2\overline{n}-d-\overline{q}$ (where $d$ has already be computed in the preceding case). Since $\overline{n}\to\infty$ as $a\to0$, it remains to take $a$ small enough that $2\overline{n}-d-\overline{q}>d-\overline{q}$, i.e., $\overline{n}>d$.

The proof of Lemma~\ref{SOS} is complete. \qed

\subsection{Decomposition of $\widehat D_{0t}$}\label{ssA6}

The symbol of the operator $\widehat D_{0t}$ has the form
\begin{equation*}
  D_{0t}(p)=\begin{pmatrix}
      0 & p_1+ip_2-\frac{2\pi i q(t)}{l} \\
      p_1-ip_2+\frac{2\pi i q(t)}{l} & 0
    \end{pmatrix}.
\end{equation*}
Using this expression, one can readily compute
\begin{equation}\label{D0t}
\begin{aligned}
  \widehat D_{0t}u_{mn}&=\mu_0(m,n,t)u_{-m,n},
\\
  \widehat D_{0t}u_{-m,n}&=\mu_0^*(m,n,t)u_{m,n},
\end{aligned}
\end{equation}
where
\begin{equation*}
  \mu_0(m,n,t)=\frac{2\pi(n-q(t))}{l}+i\frac{\pi m}{L}.
\end{equation*}
We see that the space~${\mathcal{H}}_0$ splits into the direct sum of two-dimensional invariant subspaces spanned by $u_{mn}$ and $u_{-m,n}$ for $m>0$ and one-dimensional invariant subspaces spanned by $u_{0n}$. The eigenvalues are $\pm\abs{\mu_0(m,n,t)}^2\ne0$ on the two-dimensional subspaces and
\begin{equation*}
  \mu_0(0,n,t)=\frac{2\pi(n-q(t))}{l}
\end{equation*}
on the one-dimensional subspaces. The latter are obviously nonzero if $\abs{n}>\overline{q}$. Since $d>\overline{q}$, it follows that all eigenvectors corresponding to zero eigenvalues lie in the space
\begin{equation*}
  \widetilde{\mathcal{L}} =\operatorname{Lin}\{u_{mn}\colon m^2+n^2\le d^2\}\subset{\mathcal{H}}_0,
\end{equation*}
which is obviously invariant, because it contains $u_{mn}$ and $u_{-m,n}$ simultaneously. One can readily show that the operator $\widehat D_{0t}$ is invertible on~$\widetilde{\mathcal{L}}^\perp$.

\subsection{Convergence of the tight-binding Hamiltonian to the Dirac Hamiltonian on~$\widetilde{\mathcal{L}}$}\label{ssA7}

It follows from the formula 
\begin{equation*}
  \widehat H_{0t}\varphi_{mn}=\mu(m,n,t)\varphi_{2\overline{m}-m,n}
\end{equation*}
(see~\eqref{Hphi}) for the tight-binding Hamiltonian and the formula 
\begin{equation*}
 W(u_{mn})=\varphi_{\overline{m}+m,\overline{n}+n}
\end{equation*}
for the isomorphism $W\colon\widetilde{\mathcal{L}}\to{\mathcal{L}}$ that the operator 
\begin{equation*}
 \widehat R_t=W^{-1}\widehat H_{0t}\big|_{\mathcal{L}} W
\end{equation*}
acts by the formula
\begin{equation*}
  \widehat R_t u_{mn}=\mu(m+\overline{m},n+\overline{n},t)u_{-m,n}.
\end{equation*}
Thus, by~\eqref{D0t}, to prove the uniform convergence $\widehat R_t\to\widehat D_{0t}$ as $a\to0$, it suffices to prove that
\begin{equation*}
  \mu(m+\overline{m},n+\overline{n},t)\xrightarrow{a\to0}\mu_0(m,n,t)
\end{equation*}
uniformly with respect to~$t\in[0,1]$. 

We have (see~\eqref{mu2})
\begin{align*} 
  \mu(m+\overline{m},n+\overline{n},t)&=\frac{2}{3a}e^{-\tfrac13i\alpha}
  (e^{i\alpha}-2\cos\beta),
\\
  \alpha&=\frac{\pi m}{2M}=3a\frac{\pi m}{2L},
\\
  \beta&=\frac\pi3+\frac{\pi(n-q(t))}{N}
\\
  &=\frac\pi3+\sqrt3a\frac{\pi(n-q(t))}{l}.
\end{align*}
We take the first term of the Taylor series as $a\to0$ and obtain
\begin{align*}
  \mu(m&+\overline{m},n+\overline{n},t)
\\
  &=\frac{2}{3a}
  \biggl[1-ia\frac{\pi m}{2L}\biggr]
  \biggl[1+3ia\frac{\pi m}{2L}-1+3a\frac{\pi(n-q(t))}{l}\biggr]
\\
  &\qquad\qquad{}+O(a)
\\
  &=\frac{2\pi(n-q(t))}{l}+i\frac{\pi m}{L}+O(a)=
  \mu_0(m,n,t)
  +O(a).
\end{align*}
Thus, we have arrived at the desired result.

\clearpage

\end{document}